\renewcommand{\p@subsection}{}
\renewcommand{\p@subsubsection}{}
\providecommand{\ignore}[1]{}
\newif\ifcmnt
\newcommand{\cB}{\mathcal{B}}
\newcommand{\cN}{\mathcal{N}}
\numberwithin{equation}{section}
\newtheorem*{theorem*}{Theorem}
\newtheorem*{lemma*}{Lemma}
\newcommand{\id}{\text{id}}
\let\tikzcd@@original\tikzcd@%
\let\endtikzcd@original\endtikzcd%
\numberwithin{equation}{section}
\newtheorem{thm}{Theorem}[section]
\newtheorem{prop}[thm]{Proposition}
\newtheorem{lem}[thm]{Lemma}
\theoremstyle{definition}
\newcommand{\A}{\mathsf{A}}
\newcommand{\B}{\mathsf{B}}
\newcommand{\C}{\mathsf{C}}
\newcommand{\cvbell}{CV Bell}
\newcommand{\pamp}{a}
\newcommand{\pnoise}{n}
\begin{document}

\title{Constraints on Gaussian Error Channels and Measurements for Quantum Communication}
\author{Alex Kwiatkowski}
\affiliation{National Institute of Standards and Technology, Boulder, Colorado 80305, USA}
\affiliation{Department of Physics, University of Colorado, Boulder, Colorado, 80309, USA}
\author{Ezad Shojaee}
\affiliation{National Institute of Standards and Technology, Boulder, Colorado 80305, USA}
\affiliation{Department of Physics, University of Colorado, Boulder, Colorado, 80309, USA}
\author{Sristy Agrawal}
\affiliation{National Institute of Standards and Technology, Boulder, Colorado 80305, USA}
\affiliation{Department of Physics, University of Colorado, Boulder, Colorado, 80309, USA}
\author{Akira Kyle}
\affiliation{National Institute of Standards and Technology, Boulder, Colorado 80305, USA}
\affiliation{Department of Physics, University of Colorado, Boulder, Colorado, 80309, USA}
\author{Curtis Rau}
\affiliation{National Institute of Standards and Technology, Boulder, Colorado 80305, USA}
\affiliation{Department of Physics, University of Colorado, Boulder, Colorado, 80309, USA}
\author{Scott Glancy}
\affiliation{National Institute of Standards and Technology, Boulder, Colorado 80305, USA}
\author{Emanuel Knill}
\affiliation{National Institute of Standards and Technology, Boulder, Colorado 80305, USA}
\affiliation{Center for Theory of Quantum Matter, University of Colorado, Boulder, Colorado 80309, USA}
\begin{abstract}
Joint Gaussian measurements of two quantum systems can be used for quantum communication between remote parties, as in teleportation or entanglement swapping protocols.
Many types of physical error sources throughout a protocol can be modeled by independent Gaussian error channels acting prior to measurement.
In this work we study joint Gaussian measurements on two modes $\A$ and $\B$ that take place after independent single-mode Gaussian error channels, for example loss with parameters $l_\A$ and $l_\B$ followed by added noise with parameters $\pnoise_\A$ and $\pnoise_\B$.
We show that, for any Gaussian measurement, if $l_\A + l_\B + \pnoise_\A + \pnoise_\B \geq 1$ then the effective total measurement is separable and unsuitable for teleportation or entanglement swapping of arbitrary input states. If this inequality is not satisfied then there exists a Gaussian measurement that remains inseparable. We extend the results and determine the set of pairs of single-mode Gaussian error channels that render all Gaussian measurements separable.
\end{abstract}
\maketitle
\section{Introduction}
Quantum communication between remote parties is a key requirement in the establishment of a quantum network~\cite{Kozlowski_2020}.
One important tool for quantum communication is a joint measurement of two parties, as in teleportation~\cite{PhysRevLett.70.1895,PhysRevLett.80.869,Braunstein_2005}  or entanglement swapping~\cite{PhysRevA.61.010302}.
Photonic modes are typically the easiest quantum systems to directly transmit between remote parties, and this motivates a joint measurement of two bosonic systems~\cite{Weedbrook_2012,Adesso_2014,WANG_2007} as a natural choice for quantum communication.
This choice allows the remote parties to in principle transfer entanglement to quantum systems that can be discrete-variable, non-optical, or otherwise physically different from the transmitted quantum systems~\cite{PhysRevLett.97.150403,PhysRevLett.114.100501,https://doi.org/10.1002/que2.60,PhysRevLett.83.2095}.
For example, entanglement swapping can be combined with transduction~\cite{Lauk_2020} to establish entanglement between remote systems that cannot be directly transmitted.
Although both Gaussian ~\cite{PhysRevA.89.022331,PhysRevA.60.2752,dellanno2016nongaussian,Tian:18}
and non-Gaussian~\cite{Lim_2016,PhysRevA.94.032333} joint measurements have been studied in this context and realized experimentally~\cite{Furusawa706,PhysRevLett.117.240503,PhysRevLett.94.220502,PhysRevLett.93.250503,PhysRevLett.114.100501,Takeda2013-kv},
high-efficiency Gaussian measurements are more readily available.
Furthermore, many types of physical error sources throughout a protocol can be modeled by independent Gaussian error channels acting prior to measurement~\cite{Holevo2007,cerf_leuchs_polzik,PhysRevA.50.3295,Noh_2020}.
The combination of an ideal Gaussian measurement with Gaussian error channels results in an effective total measurement that is also Gaussian~\cite{holevo2020structure}. If this effective Gaussian measurement is separable, meaning that the POVM elements of the measurement are all convex combinations of positive product operators, then it is unsuitable for teleportation or entanglement swapping regardless of the input states.

Motivated by applications to quantum communication, we investigate independent pairs of Gaussian error channels for which all effective Gaussian measurements are separable.
Special cases of this scenario have been studied previously, usually in the context of entanglement swapping with a joint Gaussian measurement on specific input states. In these studies, the joint Gaussian measurement is often the continuous-variable (CV) Bell measurement, which can be implemented by opposite-quadrature measurements after a balanced beamsplitter. In particular, Ref.~\cite{Hoelscher_Obermaier_2011} analyzes entanglement swapping in the specific case where two parties each prepare the same two-mode Gaussian state and each send one mode through lossy channels to a swapping station that implements a CV Bell measurement.
Implicit in that analysis is the requirement $l_\A + l_\B < 1$ for establishing remote entanglement with that scheme, where
$l_{\A}$ and $l_{\B}$ are the loss parameters of the two channels. (We provide definitions of parameters for loss, noise, and other channels in Section~\ref{section:preliminaries}.)
In a similar scenario, Ref.~\cite{Pirandola_2021} analyzes entanglement swapping in the case where the two parties hold identical two-mode-squeezed-vacuum states subjected to loss $l$ followed by symmetric displacement noise that can be classically correlated.
In the special case of independent noise channels with parameter $\pnoise$, the results of that analysis imply that $l+\pnoise<\tfrac{1}{2}$ is necessary to produce remote entanglement after the CV Bell measurement at the swapping station.
Similarly, Ref.~\cite{PhysRevLett.97.150403} analyzes entanglement swapping with a CV Bell measurement when both parties start with the same two-mode Gaussian state and send one mode to the swapping station and implicitly finds the requirement $l + \pnoise  < \tfrac{1}{2}$ for establishing remote entanglement.
Ref.~\cite{PhysRevA.89.022331} analyzes a modified entanglement swapping scenario in which each party sends two modes of a local tripartite quantum state to a central station, where a \cvbell{} measurement is performed on two modes and homodyne measurements are made on the other two modes. For this scenario Ref.~\cite{PhysRevA.89.022331} provides a necessary and sufficient condition for successful swapping of entanglement for Gaussian input states in terms of the purities of the reduced input states.

In addition to all-Gaussian scenarios, entanglement swapping to establish entanglement between discrete-variable systems that are initially entangled with CV modes is studied in Refs.~\cite{parker2017hybrid,Parker_2020}. In this case, the CV modes to be measured experience low levels of asymmetric loss before they are combined on a balanced beam-splitter and one mode is projected into vacuum while a homodyne measurement is made on the other~\cite{parker2017hybrid,Parker_2020}.
This swapping measurement is Gaussian and our analysis applies.
Similarly, Ref.~\cite{Silveri_2016} analyzes a procedure to entangle remote transmon qubits by first locally entangling them with CV modes and then performing an effective \cvbell{} measurement on the CV modes. This analysis also arrives at the requirement $l < \tfrac{1}{2}$ for symmetric loss $l$ in order to generate any entanglement.

In this work we analyze amplification channels with parameters $\pamp_\A$ and $\pamp_\B$ followed by loss channels with parameters $l_\A$ and $l_\B$ that act on two modes $\mathsf{A}$ and $\mathsf{B}$ prior to measurement and prove that, for an arbitrary Gaussian measurement $G$, if $l_\A + l_\B \geq 1$ then the effective measurement is separable, regardless of $\pamp_\A$ and $\pamp_\B$ (Prop.~\ref{prop:amp_loss}).
Entanglement swapping is possible with a \cvbell{} measurement when
$l_\A + l_\B < 1$ and for any $\pamp_\A, \pamp_\B$, so the inequality $l_\A + l_\B \geq 1$ for separability of all
  effective measurements is tight (Prop.~\ref{prop:amp_loss_tight}). We call this inequality
  the loss condition for separability. We demonstrate
that the loss condition for separability can alternatively be obtained by applying the dual
error channels to the POVM elements of an arbitrary ideal Gaussian measurement
and showing that they become separable when this condition is met (Section~\ref{section:dual_channels}). Furthermore, we determine the set of pairs of single-mode Gaussian error channels that have the property that all effective Gaussian measurements are separable (Section~\ref{section:all_pairs}). We achieve this by using the classification of Gaussian channels in
Ref.~\cite{Holevo2007}
to reduce every non-trivial pair to the case of channels consisting of amplification followed by loss.
For example, if the channels consist
of loss $l_{\A}$ and $l_{B}$ followed by added noise $\pnoise_{\A}$ and
$\pnoise_{B}$, then all effective Gaussian measurements are separable when
$l_\A + l_\B + \pnoise_\A + \pnoise_\B \geq 1$ (Prop.~\ref{prop:loss_noise}).

This paper is organized as follows. In Section~\ref{section:preliminaries} we provide preliminaries about Gaussian states, measurements, and channels.
In Section~\ref{section:loss_threshold} we prove Prop.~\ref{prop:loss_threshold}, which directly establishes the loss condition for separability in the special case that the amplification parameters are zero.
In Section~\ref{section:sm_channels} we establish the loss condition for separability for arbitrary amplification parameters in Prop.~\ref{prop:amp_loss} and obtain the corresponding condition when the channels are loss followed by noise in Prop.~\ref{prop:loss_noise}.
In Section~\ref{section:all_pairs} we provide a method of determining, for any pair of single-mode Gaussian error channels that act prior to measurement, whether or not all effective Gaussian measurements are separable.
In Section~\ref{section:dual_channels} we show how the loss condition for separability can be obtained by applying the dual error channels to the POVM elements of a Gaussian measurement.

\section{Preliminaries}
\label{section:preliminaries}
In this paper we assume familiarity with continuous-variable quantum mechanics and Gaussian quantum information as described, for example, in the review article Ref.~\cite{Weedbrook_2012}. We also assume a basic understanding of Gaussian channels as described in Ref.~\cite{Holevo2007}.

A bosonic system of $m$ modes is characterized by $m$ creation and annihilation operators $\{\hat{a}_i, \hat{a}^\dagger_i\}_{i=1}^m$ with commutation relations $[\hat{a}_i, \hat{a}_j] = 0$ and $[\hat{a}_i, \hat{a}^\dagger_j] = \delta_{ij}$.
We work with quadrature operators $\hat{x}_i = (\hat{a}_i + \hat{a}_i^\dagger)/\sqrt{2}$ and $\hat{p}_i = i(\hat{a}_i^\dagger - \hat{a}_i)/\sqrt{2}$ with commutation relations $[\hat{x}_i, \hat{p}_j] = i\delta_{ij}$.
We label invidudual modes as $\mathsf{A,B,C,D,E,F}$, and the annihilation operators corresponding to those modes are denoted by $\hat{a}, \hat{b}, \hat{c},\hat{d},\hat{e},\hat{f}$ respectively.
We also occasionally use $\A$ and $\B$ to refer to multi-mode systems, in which case the annihilation operators are denoted $\hat{a}_i$ or $\hat{b}_i$ where $i$ is the mode index.

The single-mode Gaussian channels that are especially relevant are loss, amplification, and added noise, which are referred to in Ref.~\cite{Ivan_2011} as types $\mathcal{C}_1, \mathcal{C}_2$, and $\mathcal{B}_2$ respectively, and are phase-insensitive.
We choose the conventions that the covariance matrix of vacuum is $\mathbb{1}/2$, a loss channel with parameter $0 \leq l \leq 1$ transforms covariance matrices $V$
according to $V \mapsto (1-l)V + l \cdot\mathbb{1}/2$, an amplification channel with parameter $\pamp > 1$ acts according to $V \mapsto \pamp V + (\pamp-1)\cdot\mathbb{1}/2$, and a noise channel with parameter $\pnoise$ acts according to $V \mapsto V + \pnoise\cdot\mathbb{1}$.
A particular family of single-mode channels that we study in this paper is realized by amplification with parameter $\pamp$ followed by loss with parameter $l$, which transforms covariance matrices $V$ according to

\begin{equation}
\label{eq:channel_cov_update}
V \mapsto \pamp(1-l)V + (\pamp(1-l) + 2l - 1)\cdot \mathbb{1}/2.
\end{equation}
We emphasize that a Gaussian channel is uniquely defined, up to displacement, by its action on covariance matrices.

Some Gaussian channels are entanglement-breaking, meaning that for any input state the output system is unentangled from every other system. We refer to Ref.~\cite{holevo2008entanglementbreaking} for further information about entanglement-breaking Gaussian channels.

 Following Refs.~\cite{PhysRevA.66.032316,holevo2020structure} we adopt the definition of a noiseless Gaussian measurement on a system $\mathsf{A}$ of $m$ modes to be any measurement consisting of an arbitrary Gaussian unitary followed by homodyne measurements on $k$ modes and heterodyne measurements on the remaining $m-k$ modes. We use the definitions of homodyne and heterodyne measurement in Ref.~\cite{holevo2020structure}, and we summarize them here.
A homodyne measurement is a von-Neumann measurement of a quadrature $\hat{x}$ with a POVM that is the spectral measure of the operator $\hat{x}$, and has the associated resolution of the identity conventionally written as $\int dx \ketbra{x}$.
A heterodyne measurement is defined by the POVM given by the resolution of the identity $\text{id} = \frac{1}{\pi}\int_\mathbb{C} d\alpha \ketbra{\alpha}$, where $\ketbra{\alpha}$ is the projector onto the coherent state labeled by $\alpha$~\cite{Adesso_2014}.
We refer to the improper projectors $\ketbra{x}$
and the projectors $\ketbra{\alpha}$ as the POVM elements of the homodyne and the heterodyne measurements, respectively. We use the notation $\ket{x=0}$ to denote the improper $0$-eigenstate of the operator $\hat{x}$ and $\ket{0}$ to denote vacuum, which is the $0$-eigenstate of the operator $\hat{a}$.
The results of Ref.~\cite{holevo2020structure} imply that a noiseless Gaussian measurement of a joint system $\A\B$ where system $\B$ is initialized in a pure Gaussian state is equivalent to a noiseless Gaussian measurement of system $\A$ alone. It follows that our results for arbitrary Gaussian measurements hold even when Gaussian ancillas are permitted.
Furthermore, Ref.~\cite{holevo2020structure} implies that a noiseless Gaussian measurement of a system of $m$ modes can be characterized by a family $\{M_{i}\}_{i=1}^{m}$ of linear combinations
of creation and annihilation operators
with the properties in List~\ref{list:gaussian_meas_props}.
\newcounter{g_meas_props}
\begin{equation}
\label{list:gaussian_meas_props}
\begin{aligned}
\refstepcounter{g_meas_props}(\roman{g_meas_props})\quad & \{M_{i}\}_{i=1}^{m} \;\text{are mutually commuting}\\
\refstepcounter{g_meas_props}(\roman{g_meas_props})\quad & \{M_{i}\}_{i=1}^{m} \;\text{are linearly independent}\\
\refstepcounter{g_meas_props}(\roman{g_meas_props})\quad & M_i = M_i^\dagger, i = 1\dots k, \;\text{for some integer} \;k\\
\refstepcounter{g_meas_props}(\roman{g_meas_props})\quad & \left[M_j, M_{j^\prime}^\dagger\right]=\delta_{j j^\prime}, \; k+1 \leq j,j^\prime \leq m.
\end{aligned}
\end{equation}
Thus, the first $k$ operators behave like
quadrature operators, and the last $m-k$ behave like annihilation operators.
We note that $k$ can be $0$, in which case all of the operators $\{M_i\}_{i=1}^m$ satisfy property $(iv)$ and none of them satisfy property $(iii)$.
If the operators $\{M_i\}_{i=1}^m$ satisfy the properties in List~\ref{list:gaussian_meas_props}, then they can be mapped to the operators $\hat{x}_1,...,\hat{x}_k, \hat{a}_{k+1},...,\hat{a}_m$ by a Gaussian unitary $U$.
This follows from the fact that the $m$ operators
$\{M_1,...,M_k,(M_{k+1} + M_{k+1}^\dagger)/\sqrt{2},...,(M_{m} + M_{m}^\dagger)/\sqrt{2}\}$ and the $m-k$ operators $\{i(M_{k+1}^\dagger - M_{k+1})/\sqrt{2},...,i(M_{m}^\dagger - M_{m})/\sqrt{2}\}$
satisfy the same commutation relations as the quadrature operators  $\{\hat{x}_1,...,\hat{x}_m\}$ and $\{\hat{p}_{k+1},...,\hat{p}_m\}$
  respectively, and by Ref.~\cite{de2006symplectic} they can be extended to a ``symplectic basis,'' which is defined as a basis that is related to the original quadrature operators by a symplectic transform that corresponds to a Gaussian unitary $U$.
As a result, the state $\ket{\psi} := U(\ket{\hat{x}=0}_1\otimes \dots \otimes \ket{\hat{x} = 0}_k\otimes \ket{0}_{k+1} \otimes \dots \otimes \ket{0}_{m})$, is annihilated by all the $\{M_i\}_{i=1}^m$, meaning $M_i\ket{\psi} = 0$ for all $i$.
If an arbitrary multi-mode displacement $\boldsymbol{\alpha}$ is applied, the state $\mathcal{D}(\boldsymbol\alpha) \ket{\psi}$ is also a simultaneous eigenstate of the $\{M_i\}_{i=1}^m$ with eigenvalues determined by the displacement.
In this paper, when we say that a Gaussian measurement is characterized by
$\{M_{i}\}_{i=1}^{m}$, we mean that the $M_{i}$ satisfy the properties in List~\ref{list:gaussian_meas_props} and that the POVM elements of the measurement consist of the joint eigenstates of $\{M_i\}_{i=1}^m$, which are of the form
$\mathcal{D}(\boldsymbol\alpha)\ketbra{\psi}\mathcal{D}^\dagger(\boldsymbol\alpha)$
for all possible displacements $\boldsymbol\alpha$.
An important property of operators $M$ in the span of $\{M_{i}\}$ is that they have non-negative commutator with their adjoints, that is $[M, M^{\dagger}] \geq 0$.
Furthermore, any operator $M$ satisfying $[M, M^{\dagger}] \geq 0$ can be extended to a list of operators $\{M_i\}_{i=1}^m$ that characterize a Gaussian measurement such that $M$ is in the span of the $\{M_i\}_{i=1}^m$.
This can be verified in the case that $[M, M^{\dagger}] = 0$ by first observing that $M+M^\dagger$ and $iM^\dagger - iM$ are commuting self-adjoint operators so their span, which contains $M$, admits a basis of commuting self-adjoint operators.
This basis can be extended to a symplectic basis and in particular can be extended to a list of $m$ linearly independent mutually commuting self-adjoint operators~\cite{de2006symplectic}.
Similarly, if $[M, M^{\dagger}] > 0$ then it can be rescaled so that $[M, M^{\dagger}] = 1$ and the operators $\{(M + M^\dagger)/\sqrt{2}, i(M^\dagger - M)/\sqrt{2}\}$ then satisfy the commutation relations of
  $\hat{x}_m, \hat{p}_m$.
These can be extended to a symplectic basis
  $\{\hat{x}_1,\dots,\hat{x}_m, \hat{p}_1,\dots,\hat{p}_m\}$, and the list of operators
  $\{\hat{x}_1,\dots,\hat{x}_{m-1},M\}$ then satisfies the properties in List~\ref{list:gaussian_meas_props} for $k=m-1$.

As part of our methods of proof we analyze Gaussian measurements on a system $\mathsf{AC}$ when the subsystem $\mathsf{C}$ is initialized in vacuum.
If $\ketbra{\psi}_{\A\C}$ is a POVM element of the full $\mathsf{AC}$ measurement, then the effective POVM element on system $\mathsf{A}$ is
\begin{equation}
\Pi_\A = \text{tr}_\C\big[\text{id}_\A \otimes \ketbra{0}_\C \cdot \ketbra{\psi}_{\A\C}\big].
\label{eq:effectivepovm}
\end{equation}
The right-hand-side is mathematically equivalent to the calculation of the state on $\mathsf{A}$ after the system $\mathsf{AC}$ is initialized in the Gaussian state $\ketbra{\psi}_{\A\C}$ and then system $\mathsf{C}$ is projected into vacuum $\ketbra{0}_\C$. In such a situation, the resulting state on system $\mathsf{A}$ is a Gaussian pure state, which implies that the effective POVM element $\Pi_\A$ is a projector (possibly improper) onto a pure Gaussian state on system $\mathsf{A}$.

When error channels are applied before a joint measurement the
effective measurement may be separable. We use the definition that a
Gaussian measurement is separable if it has POVM elements that are all
convex combinations of positive product operators.
If a measurement is separable according to this definition then it cannot be used to generate entanglement during entanglement swapping nor to teleport a state such that entanglement with another system is preserved.
For completeness we elaborate on this point in Section~\ref{section_sep_meas}.

In this paper, the Gaussian measurements that we directly prove are separable can be expressed by circuits of the form shown in the left side of Diagram~\ref{dia:trace_to_meas}, which shows a Gaussian unitary $U_G$ acting on two modes $\mathsf{A,B}$ and two vacuum ancilla modes $\mathsf{E,F}$, followed by a noiseless Gaussian measurement $G$ of $\mathsf{A,B}$ and loss of the modes $\mathsf{E,F}$ that is mathematically represented by partial trace.
\begin{equation}
\label{dia:trace_to_meas}
\begin{quantikz}
 \lstick{$\mathsf{A}$} &\gate[wires=4]{U_G} &\gate[wires=2,style={xscale=.7,rounded rectangle,rounded rectangle left arc=none,inner sep=6pt,fill=white}]{G} \\
  \lstick{$\mathsf{B}$} &\ghost{U_G} & \ghost{G} \\
 \lstick{$\ket{0}_\mathsf{E}$} & \ghost{U_G} & \qw\rstick[wires=2]{tr.} \\
 \lstick{$\ket{0}_\mathsf{F}$} & \ghost{U_G}& \qw&
 \end{quantikz}
 \;\; \rightarrow \;\;
 \begin{quantikz}
  \lstick{$\mathsf{A}$} &\gate[wires=4]{U_G} &\gate[wires=2,style={xscale=.7,rounded rectangle,rounded rectangle left arc=none,inner sep=6pt,fill=white}]{G} \\
 \lstick{$\mathsf{B}$} &\ghost{U_G}& \ghost{G} \\
 \lstick{$\ket{0}_\mathsf{E}$} & \ghost{U_G} & \gate[wires=2,style={xscale=.7,rounded rectangle,rounded rectangle left arc=none,inner sep=6pt,fill=white}]{\tilde{G}} \\
 \lstick{$\ket{0}_\mathsf{F}$} & \ghost{U_G}&\ghost{\tilde{G}}
 \end{quantikz}
\end{equation}
To obtain a sufficient condition for the separability of certain Gaussian measurements of this form, we introduce a hypothetical Gaussian measurement $\tilde{G}$ on modes $\mathsf{E,F}$, as shown on the right side of Diagram~\ref{dia:trace_to_meas}.
The original measurement on the left side is equivalent to the measurement shown on the right side followed by loss of the outcome of the measurement $\tilde{G}$.
The measurement on the right side can be interpreted as a noiseless Gaussian measurement of the full system $\mathsf{A,B,E,F}$ where the modes $\mathsf{E,F}$ are initialized in vacuum.
This measurement is equivalent to some Gaussian measurement of just the modes $\mathsf{A,B}$, and we refer to its POVM elements as the effective system $\A\B$ POVM elements. If the effective system $\A\B$ POVM elements are all product operators then the original measurement on the left side of Diagram~\ref{dia:trace_to_meas} must have POVM elements that are all separable.
This follows because the effective system $\mathsf{AB}$ POVM elements of the measurement on the left side must be convex combinations of the system $\mathsf{AB}$ POVM elements of the measurement on the right, where the convex combinations result from the loss of the outcome of the measurement $\tilde{G}$.

\section{Loss condition for separability}
\label{section:loss_threshold}
In this section we establish that the effective measurement shown in the circuit
\begin{equation}
\begin{quantikz}
\lstick{$\mathsf{A}$} & \gate{l_\A} &\gate[wires=2,style={xscale=.7,rounded rectangle,rounded rectangle left arc=none,inner sep=6pt,fill=white}]{G} \\
\lstick{$\mathsf{B}$} & \gate{l_\B}& \ghost{G}
\end{quantikz}
\end{equation}
is separable for all Gaussian measurements $G$ when the loss-channel parameters $l_\A, l_\B$ satisfy
\begin{equation}
\label{eq:loss_threshold}
l_\A + l_\B \geq 1.
\end{equation}
The separability condition Eq.~\ref{eq:loss_threshold} is
proven in Prop.~\ref{prop:loss_threshold}.
The overall strategy is to first dilate the loss channels by applying beamsplitters with two vacuum ancilla modes $\mathsf{E}, \mathsf{F}$~\cite{Holevo2007,Ivan_2011}, as in the left side of Diagram~\ref{diagramlossdilation}.
\begin{equation}
\begin{quantikz}
\lstick{$\mathsf{A}$} &\phase{l_\A}&\qw &\gate[wires=2,style={xscale=.7,rounded rectangle,rounded rectangle left arc=none,inner sep=6pt,fill=white}]{G} \\
\lstick{$\mathsf{B}$} &\qw&\phase{l_\B} & \ghost{G} \\
\lstick{$\ket{0}_\mathsf{E}$} & \ctrl{-2} &\qw & \qw\rstick[wires=2]{tr.} \\
\lstick{$\ket{0}_\mathsf{F}$} & \qw &\ctrl{-2}& \qw&
\end{quantikz}
\;\; \rightarrow \;\;
\begin{quantikz}
\lstick{$\mathsf{A}$} &\phase{l_\A}&\qw &\gate[wires=2,style={xscale=.7,rounded rectangle,rounded rectangle left arc=none,inner sep=6pt,fill=white}]{G} \\
\lstick{$\mathsf{B}$} &\qw&\phase{l_\B} & \ghost{G} \\
\lstick{$\ket{0}_\mathsf{E}$} & \ctrl{-2} &\qw & \gate[wires=2,style={xscale=.7,rounded rectangle,rounded rectangle left arc=none,inner sep=6pt,fill=white}]{\tilde{G}} \\
\lstick{$\ket{0}_\mathsf{F}$} & \qw &\ctrl{-2}&\ghost{\tilde{G}}
\end{quantikz}\label{diagramlossdilation}
\end{equation}
Here, vertical lines connecting
  two horizontal mode lines indicate beamsplitters. The beamsplitter
  labels indicate the beamsplitter reflectivity for the labeled mode,
  which is equal to the loss parameter. Then, consider replacing the trace on the two ancilla modes $\mathsf{E}$ and $\mathsf{F}$ by a Gaussian measurement
$\tilde{G}$ to be determined, as in the right side of Diagram~\ref{diagramlossdilation}.
 We show that
whenever $l_\A + l_\B \geq 1$, for all Gaussian measurements $G$ there
exists a corresponding Gaussian measurement $\tilde{G}$ for which the effective system $\A\B$ POVM elements of the combined $G,\tilde{G}$ measurement are all product operators.
As discussed after Diagram~\ref{dia:trace_to_meas}, it follows in this case that the
effective measurement on the left-hand side of Diagram~\ref{diagramlossdilation}
is separable. To this end, we first give an algebraic sufficient condition
for noiseless Gaussian measurements with vacuum ancillas to have POVM elements that are all product operators.

\begin{prop}
\label{prop:simplified_swapping}
Let $\mathsf{A,B,E,F}$ be one-mode bosonic systems where modes $\mathsf{E}$ and $\mathsf{F}$ are prepared in vacuum. Consider a noiseless Gaussian measurement $G$ of $\mathsf{ABEF}$ characterized by four operators $\{M_1, M_2, M_3, M_4\}$ as described after List~\ref{list:gaussian_meas_props}.
If there exists a complex linear combination of the $\{M_i\}$ of the form
\begin{equation}
\hat{M} = \hat{A} + \xi\hat{e}^\dagger + \zeta\hat{f}^\dagger,
\end{equation}
with $\hat{A}$ a non-zero linear combination of $\{\hat{a},  \hat{a}^\dagger\}$ on system $\mathsf{A}$
and $\xi,\zeta$ arbitrary complex coefficients, then the effective measurement on system $\mathsf{AB}$ has POVM elements that are all projectors onto pure product Gaussian states.
\end{prop}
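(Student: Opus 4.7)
The plan is to show that for every POVM element of the full measurement $G$, projecting modes $\mathsf{E}$ and $\mathsf{F}$ onto vacuum forces the resulting effective state on $\mathsf{AB}$ to lie in the kernel of $\hat{A}$ on system $\mathsf{A}$ alone, and then to argue this kernel is one-dimensional and spanned by a pure Gaussian state, so the effective state must factor as a product. By the discussion after List~\ref{list:gaussian_meas_props}, each POVM element of $G$ is a projector onto a joint eigenstate $|\psi_{\boldsymbol\alpha}\rangle = \mathcal{D}(\boldsymbol\alpha)|\psi\rangle$ of $\{M_i\}_{i=1}^4$ with eigenvalues determined by $\boldsymbol\alpha$. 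In particular $\hat{M}|\psi_{\boldsymbol\alpha}\rangle = c|\psi_{\boldsymbol\alpha}\rangle$ for some $c=c(\boldsymbol\alpha)\in\mathbb{C}$, so $(\hat{A}-c + \xi\hat{e}^\dagger + \zeta\hat{f}^\dagger)|\psi_{\boldsymbol\alpha}\rangle = 0$.

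The key computation is to apply $\langle 0|_{\mathsf{E}}\langle 0|_{\mathsf{F}}$ from the left. Since $\hat{e}|0\rangle_{\mathsf{E}}=0$, the dual relation $\langle 0|_{\mathsf{E}}\hat{e}^\dagger=0$ holds and similarly for $\mathsf{F}$; also $\hat{A}$ acts only on $\mathsf{A}$ and commutes with the ancilla operators, so it passes through $\langle 00|_{\mathsf{EF}}$. Setting $|\phi_{\boldsymbol\alpha}\rangle_{\mathsf{AB}} := \langle 00|_{\mathsf{EF}}|\psi_{\boldsymbol\alpha}\rangle$, one obtains the simple eigenvalue equation $(\hat{A}-c)|\phi_{\boldsymbol\alpha}\rangle_{\mathsf{AB}} = 0$ with $\hat{A}-c$ acting on $\mathsf{A}$ alone. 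By Eq.~\ref{eq:effectivepovm}, the corresponding effective POVM element on $\mathsf{AB}$ is $|\phi_{\boldsymbol\alpha}\rangle\langle\phi_{\boldsymbol\alpha}|_{\mathsf{AB}}$, which is Gaussian because it is obtained by projecting the Gaussian state $|\psi_{\boldsymbol\alpha}\rangle$ onto the Gaussian state $|00\rangle_{\mathsf{EF}}$.

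Next I use the fact, noted after List~\ref{list:gaussian_meas_props}, that operators in the span of $\{M_i\}$ have non-negative commutator with their adjoints, hence $[\hat{M},\hat{M}^\dagger]\geq 0$. A direct expansion of the commutator, using that $[\hat{e}^\dagger,\hat{e}] = [\hat{f}^\dagger,\hat{f}] = -1$ and that operators on different modes commute, yields $[\hat{A},\hat{A}^\dagger] \geq |\xi|^2+|\zeta|^2 \geq 0$. Writing $\hat{A}=\alpha\hat{a}+\beta\hat{a}^\dagger$, this enforces $|\alpha|\geq|\beta|$, so $\hat{A}-c$ is either (up to rescaling) a displaced, rotated quadrature (when $|\alpha|=|\beta|$) or a displaced, possibly squeezed annihilation operator (when $|\alpha|>|\beta|$); in each case the kernel on mode $\mathsf{A}$ is spanned by a single pure Gaussian state $|\chi_{\boldsymbol\alpha}\rangle_{\mathsf{A}}$, improper in the quadrature case. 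Because $\hat{A}-c$ acts as identity on $\mathsf{B}$, any state it annihilates on $\mathsf{AB}$ must take the form $|\chi_{\boldsymbol\alpha}\rangle_{\mathsf{A}}\otimes|\eta_{\boldsymbol\alpha}\rangle_{\mathsf{B}}$ with $|\eta_{\boldsymbol\alpha}\rangle_{\mathsf{B}}$ Gaussian (since $|\phi_{\boldsymbol\alpha}\rangle$ is Gaussian). The effective POVM element therefore factors as $|\chi_{\boldsymbol\alpha}\rangle\langle\chi_{\boldsymbol\alpha}|_{\mathsf{A}}\otimes|\eta_{\boldsymbol\alpha}\rangle\langle\eta_{\boldsymbol\alpha}|_{\mathsf{B}}$, completing the claim.

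The main subtlety I expect is the quadrature case $|\alpha|=|\beta|$, where the kernel of $\hat{A}-c$ consists of an improper eigenstate and one must argue the factorization $|\phi_{\boldsymbol\alpha}\rangle=|\chi_{\boldsymbol\alpha}\rangle_{\mathsf{A}}\otimes|\eta_{\boldsymbol\alpha}\rangle_{\mathsf{B}}$ in a rigged-Hilbert-space sense consistent with how improper homodyne POVM elements $\ket{x}\bra{x}$ are treated in Section~\ref{section:preliminaries}. A secondary bookkeeping point is that the eigenvalue $c=c(\boldsymbol\alpha)$ depends on the POVM label, so one must verify that the argument runs uniformly over all displacements; this follows immediately from the fact that displacements commute with the commutator structure, so the kernel of $\hat{A}-c$ remains one-dimensional for every $c$.
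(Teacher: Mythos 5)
Your proposal is correct and follows essentially the same route as the paper's proof: apply the eigenvalue relation for $\hat{M}$, use that $\hat{e}^\dagger$ and $\hat{f}^\dagger$ annihilate $\bra{0}$ from the right to reduce to an eigenvalue equation for $\hat{A}$ acting on mode $\mathsf{A}$ alone, and then invoke uniqueness of that eigenstate to force the product form (the paper does this via the matrix elements $\bra{k}_{\mathsf{B}}\Pi_{\mathsf{AB}}\ket{k'}_{\mathsf{B}}$ rather than at the vector level, which is equivalent). Your explicit computation that $[\hat{A},\hat{A}^\dagger]\geq|\xi|^2+|\zeta|^2\geq 0$ nicely fills in a step the paper leaves implicit in the phrase ``by construction.''
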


\begin{proof}
Let $\ket{\phi}_{\mathsf{ABEF}}$ be a simultaneous eigenstate of the $M_i$.
The displacements of $\ket{\phi}_{\mathsf{ABEF}}$ are also simultaneous eigenstates,
and there is a displacement for which the eigenvalue of $\hat{M}$
is not zero.
Without loss of generality, assume that $\hat{M}\ket{\phi} = \lambda\ket{\phi}$ with
$\lambda \ne 0$.
By rescaling $\hat{M}$ we can assume that $\lambda=1$.
According to Eq.~\ref{eq:effectivepovm}, the effective system $\A\B$ POVM elements of the measurement $G$ can be expressed as
\begin{equation}
\Pi_{\mathsf{AB}} = \text{tr}_{\mathsf{EF}} \left(\text{id}_{\mathsf{AB}} \otimes \ketbra{0}_{\mathsf{EF}} \cdot \ketbra{\phi}_{\mathsf{AB}\mathsf{EF}}\right)
\end{equation}
or are related to $\Pi_{\A\B}$ by displacements.
Insert $\hat{M}$ in the partial trace to obtain
\begin{equation}
\Pi_{\mathsf{AB}} = \text{tr}_{\mathsf{EF}} \left(\text{id}_{\mathsf{AB}} \otimes \ketbra{0}_{\mathsf{EF}} \hat{M}\cdot \ketbra{\phi}_{\mathsf{AB}\mathsf{EF}}\right).
\end{equation}
The creation operators $\hat{e}^\dagger$ and $\hat{f}^\dagger$ annihilate vacuum from the right, so this simplifies to
\begin{equation}
\Pi_{\mathsf{AB}} = \text{tr}_{\mathsf{EF}} \left[\text{id}_{\mathsf{AB}} \otimes \ketbra{0}_{\mathsf{EF}} \hat{A}\cdot \ketbra{\phi}_{\mathsf{AB}\mathsf{EF}}\right] = \hat{A}\Pi_{\mathsf{AB}} .
\end{equation}
By taking adjoints on both sides, we also have $\Pi_{\mathsf{AB}} = \Pi_{\mathsf{AB}}\hat{A}^\dagger$.
By construction there is a one-mode Gaussian measurement that is characterized
by $\hat A$, and the $1$-eigenstate $\ket{\varphi}_{\mathsf{A}}$  of $\hat A$ is unique.
Intuitively, because $\hat A-\id_\A$ annihilates $\Pi_{\mathsf{AB}}$ on the left and $\hat A^\dagger - \id_\A$ annihilates it
on the right, $\Pi_{\mathsf{AB}}$ must be of the form $\Pi_{\mathsf{B}}\otimes \ketbra{\varphi}_{\mathsf{A}}$.
To verify this statement, let $\{\ket{k}_{\mathsf{B}}\}_{k=0}^\infty$  be an orthonormal basis
of $\mathsf{B}$, for example the number state basis. Then
$\id_\B = \sum_k \ketbra{k}_{\mathsf{B}}$ and we can express
\begin{align}
  \Pi_{\mathsf{AB}} &= \sum_k \ketbra{k}_{\mathsf{B}} \Pi_{\mathsf{AB}}\sum_{k'} \ketbra{k'}_{\mathsf{B}} \nonumber\\
         &= \sum_{k,k'} \ketbra{k}{k'}_{\mathsf{B}} \bra{k}_{\mathsf{B}}\Pi_{\mathsf{AB}} \ket{k^\prime}_{\mathsf{B}}.  \label{eq:prod1}
\end{align}
The operators $\Pi_{k,k'} := \bra{k}_{\mathsf{B}}\Pi_{\mathsf{AB}} \ket{k'}_{\mathsf{B}}$
on $\mathsf{A}$ are annihilated on the left by $\hat{A}-1$ and on the right by $\hat{A}^\dagger-1$.
They are therefore proportional to $\ketbra{\varphi}_{\mathsf{A}}$. Substituting accordingly
on the right-hand side of Eq.~\ref{eq:prod1} expresses   $\Pi_{\mathsf{AB}}$
as a tensor product of a projector acting on $\mathsf{B}$ with a projector onto a pure Gaussian state acting on $\mathsf{A}$.
Since $\Pi_{\mathsf{AB}}$ is a pure Gaussian state, the projector acting on $\mathsf{B}$
is also pure Gaussian.
Because all POVM elements of a Gaussian measurement are related by displacements, it follows that all effective system $\A\B$ POVM elements of the measurement $G$ are product operators.
\end{proof}

\begin{prop}
\label{prop:loss_threshold}
Let $G$ be a Gaussian measurement of modes $\mathsf{AB}$.
Consider the effective Gaussian measurement described in the circuit
\begin{equation}
\label{dia:loss_threshold}
\begin{quantikz}
\lstick{$\mathsf{A}$} & \gate{l_\A} &\gate[wires=2,style={xscale=.7,rounded rectangle,rounded rectangle left arc=none,inner sep=6pt,fill=white}]{G} \\
\lstick{$\mathsf{B}$} & \gate{l_\B}& \ghost{G}
\end{quantikz}
\end{equation}
If the loss channel parameters $l_\A$ and $l_\B$ satisfy $l_\A + l_\B \geq 1$, then this
effective Gaussian measurement
is separable.
\end{prop}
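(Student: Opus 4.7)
The plan is to apply Proposition~\ref{prop:simplified_swapping} by constructing, for each Gaussian measurement $G$ of $\A\B$, a companion Gaussian measurement $\tilde G$ on the vacuum-initialized ancillary modes $\E, \F$ obtained by dilating the two loss channels to beamsplitters (as on the right of Diagram~\ref{diagramlossdilation}), such that the characterization span of the combined noiseless four-mode measurement contains an operator $\hat M = \hat A + \xi \hat e^\dagger + \zeta \hat f^\dagger$ with $\hat A \neq 0$ a linear combination of $\{\hat a, \hat a^\dagger\}$. Once such an $\hat M$ is exhibited, Proposition~\ref{prop:simplified_swapping} yields product POVM elements on $\A\B$ for the combined $G,\tilde G$ measurement, and the discussion following Diagram~\ref{dia:trace_to_meas} converts this into separability of the original loss-and-measure circuit. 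The extreme cases $l_\A = 1$ or $l_\B = 1$ are handled separately since one mode is fully traced out, leaving a trivially product POVM; I therefore assume $0 < l_\A, l_\B < 1$ in the main argument, which is consistent with $l_\A + l_\B \geq 1$.

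Write $G$'s characterizing operators as $N_i = \alpha_i \hat a + \beta_i \hat a^\dagger + \gamma_i \hat b + \delta_i \hat b^\dagger$ for $i=1,2$, and form $M := \gamma_2 N_1 - \gamma_1 N_2$ (or $M := N_1$ in the edge case $\gamma_1 = \gamma_2 = 0$). In either case $M$ lies in the span of a valid Gaussian measurement and the $\hat b$-coefficient of $M$ vanishes by construction, so $M = A \hat a + B \hat a^\dagger + \Delta \hat b^\dagger$ for specific scalars $A, B, \Delta$, and the requirement $[M, M^\dagger] \geq 0$ from the preliminaries yields the automatic inequality $|A|^2 \geq |B|^2 + |\Delta|^2 \geq |\Delta|^2$. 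Now I propagate $M$ through the beamsplitters to an operator $M'$ on $\A\B\E\F$ and seek an operator $\tilde M$ in the pre-beamsplitter image of the $\tilde G$-span so that $\hat M := M' + \tilde M$ has the desired form. Cancelling the residual $\hat e$ and $\hat b^\dagger$ components of $M'$ forces specific pre-beamsplitter $\hat e$ and $\hat f^\dagger$ coefficients of $\tilde M$, while the pre-beamsplitter $\hat e^\dagger$ coefficient remains free and gets absorbed into $\xi$, and the pre-beamsplitter $\hat f$ coefficient can be set to zero. A short calculation then shows that the $\hat a$-coefficient of $\hat A$ equals $A/\sqrt{1-l_\A}$; since $M \neq 0$ forces $A \neq 0$ via the automatic inequality above, one gets $\hat A \neq 0$ for free.

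The last and crucial step is verifying that $\tilde M$ can be extended to a valid two-operator Gaussian measurement on $\E\F$. By the extension statement in the preliminaries this reduces to checking $[\tilde M_{\mathrm{post}}, \tilde M_{\mathrm{post}}^\dagger] \geq 0$, where $\tilde M_{\mathrm{post}}$ is the post-beamsplitter counterpart of $\tilde M$ on $\E\F$. After setting $\xi = \sqrt{l_\A}\,B$ to null a non-negative term, this commutator condition collapses to
\begin{equation*}
l_\A l_\B\,|A|^2 \geq (1-l_\A)(1-l_\B)\,|\Delta|^2.
\end{equation*}
The main obstacle, and the whole point of the construction, is to see that this inequality follows: the elementary algebraic identity $l_\A + l_\B \geq 1 \Leftrightarrow l_\A l_\B \geq (1-l_\A)(1-l_\B)$, combined with the automatic bound $|A|^2 \geq |\Delta|^2$ from the preceding paragraph, produces the required inequality. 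It is precisely the interplay between the \emph{positivity constraint built into any valid Gaussian measurement} and the \emph{loss-threshold hypothesis} that makes $\tilde G$ constructible; neither alone would suffice. With $\tilde G$ in hand, Proposition~\ref{prop:simplified_swapping} delivers the product POVM elements and the argument is complete.
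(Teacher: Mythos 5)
Your proposal is correct and follows essentially the same route as the paper's own proof: eliminate the $\hat b$-component from the span of $G$'s characterizing operators, dilate the losses, cancel the $\hat e$ and $\hat b^\dagger$ terms with $\tilde M = \gamma\tilde e + \delta\tilde f^\dagger$, and reduce $[\tilde M,\tilde M^\dagger]\geq 0$ to $l_\A l_\B \geq (1-l_\A)(1-l_\B)$ via the automatic bound $|A|^2\geq|\Delta|^2$ from $[M,M^\dagger]\geq 0$. The only (cosmetic) omission is the paper's one-line opening reduction from general to noiseless Gaussian measurements, which your use of characterizing operators implicitly presupposes.
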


\begin{proof}
If the proposition holds for all noiseless Gaussian measurements, then it holds for all, potentially noisy, Gaussian measurements. We therefore restrict $G$ to be a noiseless Gaussian measurement for the remainder of the proof.

We first dilate the loss in the circuit of the proposition to obtain
the circuit shown on the left of Diagram~\ref{diagramlossdilation}, then determine a
measurement $\tilde G$ as on the right of the diagram to prove separability.
We write $\tilde{a},\tilde{b}, \tilde{e}, \tilde{f}$ for the Heisenberg-evolved mode operators after the unitary implementing
the dilated loss.
In terms of the original mode operators, we have
\begin{align}
\tilde{a} &= \sqrt{1-l_{\mathsf{A}}} \hat a + \sqrt{l_{\mathsf{A}}} \hat e, \nonumber\\
\tilde{e} &= \sqrt{1-l_{\mathsf{A}}} \hat e - \sqrt{l_{\mathsf{A}}} \hat a, \nonumber\\
\tilde{b} &= \sqrt{1-l_{\mathsf{B}}} \hat b + \sqrt{l_{\mathsf{B}}} \hat f, \nonumber\\
\tilde{f} &= \sqrt{1-l_{\mathsf{B}}} \hat f - \sqrt{l_{\mathsf{B}}} \hat b.
\end{align}
If either $l_{\mathsf{A}}=1$ or $l_{\mathsf{B}}=1$, then one of the modes
is replaced by vacuum before the measurement, which implies separability of the effective measurement. For the remainder of the proof, we assume that
neither equality holds.

Choose $M_{1}$ and $M_{2}$ acting on modes $\mathsf{AB}$ so that they characterize $G$, as described after List~\ref{list:gaussian_meas_props}. The
  operators $M_{i}$ are independent commuting linear
  combinations of the four operators $\tilde{a},\tilde{b}, \tilde{a}^\dagger, \tilde{b}^\dagger$.
By independence, there is a non-zero linear combination $M$ of the $M_{i}$ that is in the
three dimensional span of $\tilde{a}$, $\tilde{a}^\dagger$ and $\tilde{b}^{\dagger}$. Write $M=\alpha \tilde{a}+\alpha'\tilde{a}^{\dagger}+\beta'\tilde{b}^{\dagger}$ for complex coefficients $\alpha, \alpha^\prime, \beta^\prime$.
In terms of the incoming mode operators,
\begin{equation}
M = \alpha\left(\sqrt{1-l_{\mathsf{A}}}  \hat{a} +  \sqrt{l_{\mathsf{A}}}  \hat{e}\right) + \alpha'\left(\sqrt{1-l_{\mathsf{A}}}  \hat{a}^\dagger  +  \sqrt{l_{\mathsf{A}}}  \hat{e}^\dagger\right) + \beta'\left(\sqrt{1-l_{\mathsf{B}}}  \hat{b}^\dagger  +  \sqrt{l_{\mathsf{B}}}\hat{f}^\dagger\right).
\end{equation}
We wish to apply Prop.~\ref{prop:simplified_swapping}, which requires
eliminating the terms of $M$ involving $\hat{e}$ and $\hat{b}^{\dagger}$ without
introducing terms involving $\hat{b}$ or $\hat{f}$.
Suppose the Gaussian measurement $\tilde G$ is applied to modes $\mathsf{EF}$, and suppose $\tilde M$ is one of a pair of operators that characterize $\tilde G$.  Then $M + \tilde M$ is a linear combination of the operators that characterize the full measurement on $\mathsf{ABEF}$, and it can be used with Prop.~\ref{prop:simplified_swapping}.
With foresight, let $\tilde M := \gamma \tilde{e} + \delta \tilde{f}^{\dagger}$ with $\gamma = -\alpha\sqrt{l_{\mathsf{A}}}/\sqrt{1-l_{\mathsf{A}}}$
and $\delta = \beta'\sqrt{1-l_{\mathsf{B}}}/\sqrt{l_{\mathsf{B}}}$.
The coefficient $\gamma$ of $\tilde{e}$ is chosen to cancel the
term involving $\hat{e}$ in $M$ and the coefficient $\delta$ of $\tilde{f}^{\dagger}$
is chosen to cancel the term involving $\hat{b}^{\dagger}$.
By design, no
unwanted terms are introduced.
If this assignment of $\tilde M $ is possible then $M+\tilde M$ can be expressed as $M+\tilde M = A + \xi \hat e^{\dagger}+ \zeta \hat f^{\dagger}$ for some complex $\xi,\zeta$ with $A$ acting on mode $\mathsf{A}$, in which case Prop.~\ref{prop:simplified_swapping} implies that the full measurement is separable across $\mathsf{A}$ and $\mathsf{B}$.
We next show that if $l_{\mathsf{A}}+l_{\mathsf{B}} \geq 1$, then there exists a Gaussian measurement $\tilde G$ such that $\tilde M := \gamma \tilde{e} + \delta \tilde{f}^{\dagger}$ is one of a pair that characterize $\tilde G$.
In fact, as discussed after List~\ref{list:gaussian_meas_props}, it suffices for $\tilde M$ to satisfy
$[\tilde M, \tilde M^{\dagger}] =|\gamma|^{2}- |\delta|^{2}\geq 0$.
In terms of
the loss parameters, this condition is expressed as
\begin{equation}
|\alpha|^2\left(\frac{l_\A}{1-l_\A}\right) - |\beta'|^2\left(\frac{1-l_\B}{l_\B}\right) \geq 0.
\label{ineq:lalb}
\end{equation}
As noted in Section~\ref{section:preliminaries}, because $M$ is an operator
in the span of a family characterizing a Gaussian measurement,
we have $[M,M^{\dagger}] \geq 0$. In terms of the coefficients in the expression
for $M$, this becomes $|\alpha|^{2}-|\alpha'|^{2}-|\beta'|^{2} \geq 0$. In particular the inequality $|\alpha|^{2} \geq |\beta'|^{2}$ is satisfied. Thus Eq.~\ref{ineq:lalb}
is satisfied if $l_{\mathsf{A}}/(1-l_{\mathsf{A}}) \geq  (1-l_{\mathsf{B}})/l_{\mathsf{B}}$.
Multiplying out the nonnegative denominators gives the inequality
$l_{\mathsf{A}}l_{\mathsf{B}} \geq (1-l_{\mathsf{A}})(1-l_{\mathsf{B}})$,
which can be rewritten as $l_{\mathsf{A}}+l_{\mathsf{B}} \geq 1$.
\end{proof}

\section{Extension to other phase-insensitive channels}
\label{section:sm_channels}
We now use Prop.~\ref{prop:loss_threshold} to determine conditions for separability of all Gaussian measurements when other phase-insensitive single-mode Gaussian error channels are applied prior to measurement. We find it convenient to begin with channels that take the form of amplification followed by loss.

\begin{prop}
\label{prop:amp_loss}
Let $G$ be a Gaussian measurement of modes $\mathsf{AB}$, let $\pamp_\A$ and $\pamp_\B$ be parameters of amplification channels, and let $l_\A$ and $l_\B$ be parameters of loss channels. The effective measurement shown in Diagram~\ref{dia:amp_loss}
\begin{equation}
\label{dia:amp_loss}
\begin{quantikz}
\lstick{$\mathsf{A}$} & \gate{\pamp_\A} & \gate{l_\A} & \gate[wires=2,style={xscale=.7,rounded rectangle,rounded rectangle left arc=none,inner sep=6pt,fill=white}]{G} \\
\lstick{$\mathsf{B}$} & \gate{\pamp_\B} & \gate{l_\B} & \ghost{G}
\end{quantikz}
\end{equation}
is separable for all $G$ if and only if $l_\A + l_\B \geq 1$.
\end{prop}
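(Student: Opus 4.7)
The plan is to derive both directions from results already in hand: the forward direction reduces to Prop.~\ref{prop:loss_threshold}, and the converse is supplied by the cited tightness result Prop.~\ref{prop:amp_loss_tight}.

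For the ``if'' direction I would absorb the amplification into the measurement. Any Gaussian channel composed with a Gaussian measurement yields an effective (possibly noisy) Gaussian measurement in the sense of Ref.~\cite{holevo2020structure}; hence the sub-circuit consisting of the amplifications $\pamp_\A,\pamp_\B$ followed by $G$ defines a Gaussian measurement $G'$ of modes $\mathsf{AB}$. The full circuit of Diagram~\ref{dia:amp_loss} is then equivalent to the loss channels $l_\A, l_\B$ followed by the single Gaussian measurement $G'$. Prop.~\ref{prop:loss_threshold} applies to arbitrary (noisy or noiseless) Gaussian measurements, so it yields separability of the effective measurement whenever $l_\A + l_\B \geq 1$, independent of $\pamp_\A,\pamp_\B$.

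Equivalently, one may argue at the POVM level: the effective POVM elements of the full circuit are $(\pamp_\A^{*}\otimes \pamp_\B^{*})(\Pi'_\omega)$, where $\Pi'_\omega$ are the POVM elements of ``loss then $G$'' and $\pamp_\A^{*},\pamp_\B^{*}$ denote the Heisenberg duals of the amplifications. Prop.~\ref{prop:loss_threshold} gives that each $\Pi'_\omega$ is a convex combination of product operators, and a product dual channel $\pamp_\A^{*}\otimes \pamp_\B^{*}$ sends product positive operators to product positive operators, so separability is preserved. Either way, the ``if'' direction needs no new ideas beyond Prop.~\ref{prop:loss_threshold}.

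For the ``only if'' direction I would cite Prop.~\ref{prop:amp_loss_tight}, which exhibits, for every choice of $\pamp_\A,\pamp_\B$ with $l_\A + l_\B < 1$, a successful entanglement-swapping protocol employing the CV Bell measurement after the channels. Successful swapping implies the effective CV Bell measurement is not separable, so separability of \emph{all} Gaussian effective measurements fails. The only point requiring a minimum of care is the assertion that ``amplification followed by $G$'' really is a Gaussian measurement in the formal sense used here; this is standard and follows from the structure theorem of Ref.~\cite{holevo2020structure} already invoked in Section~\ref{section:preliminaries}, so I do not anticipate a real obstacle.
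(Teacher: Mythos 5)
Your high-level route coincides with the paper's: sufficiency from Prop.~\ref{prop:loss_threshold}, necessity from the tightness computation in Prop.~\ref{prop:amp_loss_tight}. However, your first formulation of the ``if'' direction fails on a channel-ordering point. In Diagram~\ref{dia:amp_loss} the amplification acts \emph{before} the loss, so ``the amplifications followed by $G$'' is not a contiguous sub-circuit, and the full circuit is not ``loss followed by $G'$'' with $G'=\pamp\circ G$: at the level of duals the effective POVM elements are $\pamp^{*}\bigl(l^{*}(\Pi_G)\bigr)$, not $l^{*}\bigl(\pamp^{*}(\Pi_G)\bigr)$, and these differ because the two duals do not commute. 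If you instead try to repair the argument by commuting the amplification through the loss, the identity $l\circ\pamp=\pamp'\circ l'$ forces a strictly smaller loss parameter $l'=l/\bigl(1+(\pamp-1)(1-l)\bigr)\le l$, so Prop.~\ref{prop:loss_threshold} applied to the reordered circuit would require $l'_\A+l'_\B\ge 1$, which does not follow from the hypothesis $l_\A+l_\B\ge 1$. Your second, POVM-level formulation is the one that actually works and is not merely ``equivalent'' to the first: Prop.~\ref{prop:loss_threshold} makes each POVM element of ``loss then $G$'' a convex combination of positive product operators, and prepending the local amplifications amounts to applying the product positive map $\pamp_\A^{*}\otimes\pamp_\B^{*}$ to those already-separable elements, which preserves separability. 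That is precisely the step the paper leaves implicit when it says Prop.~\ref{prop:loss_threshold} ``applies,'' so keep the second argument and discard the first. Your handling of the ``only if'' direction via Prop.~\ref{prop:amp_loss_tight} matches the paper exactly.
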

\begin{proof}
If $l_\A + l_\B \geq 1$, then Prop.~\ref{prop:loss_threshold} applies, and the effective measurement is separable regardless of the amplification parameters $\pamp_\A$ and $\pamp_\B$. If $l_\A + l_\B < 1$, then simulating an all-Gaussian entanglement swapping scenario with a \cvbell{} measurement, similar to the one in Ref.~\cite{Hoelscher_Obermaier_2011}, and checking for entanglement~\cite{Duan_2000,Simon} of the two output modes demonstrates that entanglement swapping is possible regardless of the values of $\pamp_\A$ and $\pamp_\B$. This calculation is described in further detail in Prop.~\ref{prop:amp_loss_tight}. As a result, the \cvbell{} measurement cannot be separable if $l_\A + l_\B < 1$, so the inequality in Prop.~\ref{prop:loss_threshold} is tight even when arbitrary amplification channels act prior to loss.
\end{proof}

We next turn to the problem of establishing a condition for separability when the error channels take the form of loss $l_\A,l_\B$ followed by added noise with parameters $\pnoise_\A,\pnoise_\B$. We obtain this condition, given in Prop.~\ref{prop:loss_noise}, by re-parameterizing the channels as amplification followed by loss and applying Prop.~\ref{prop:amp_loss}. The relationship between the two parameterizations is given by Prop.~\ref{prop:amp_loss_equivalence}.

\begin{prop}
\label{prop:amp_loss_equivalence}
A non-entanglement-breaking Gaussian error channel that consists of loss $l$ followed by added noise $\pnoise$ is equivalent to a channel that consists of amplification with parameter $\pamp^\prime = \frac{1-l}{1-l-\pnoise}$ followed by a loss channel with parameter $l^\prime = l + \pnoise$ as shown in Diagram~\ref{dia:amp_loss_equivalence}.
\end{prop}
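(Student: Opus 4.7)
The plan is to exploit the preliminary remark that a phase-insensitive Gaussian channel is uniquely determined, up to displacement, by its action on covariance matrices. Since both sides of the claimed equivalence are compositions of phase-insensitive Gaussian channels, it suffices to verify that the two parameterizations induce the same affine map on covariance matrices and that the proposed parameters $\pamp'$ and $l'$ lie in the ranges for which amplification and loss channels are defined.

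First I would compute the covariance update for loss $l$ followed by added noise $\pnoise$ by composing the two rules given in Section~\ref{section:preliminaries}:
\begin{equation}
V \mapsto (1-l)V + l\cdot\mathbb{1}/2 + \pnoise\cdot\mathbb{1} = (1-l)V + (l + 2\pnoise)\cdot\mathbb{1}/2.
\end{equation}
Then I would write the covariance update for amplification $\pamp'$ followed by loss $l'$ using Eq.~\ref{eq:channel_cov_update}, namely $V \mapsto \pamp'(1-l')V + (\pamp'(1-l')+2l'-1)\cdot\mathbb{1}/2$, and match coefficients. Matching the coefficient of $V$ gives $\pamp'(1-l')=1-l$, and substituting this into the inhomogeneous term reduces the second matching condition to $2l'-1 = 2\pnoise + l - (1-l)\cdot 0 \cdot$ (after cancellation), i.e.\ $l' = l + \pnoise$. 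Back-substituting yields $\pamp' = (1-l)/(1-l-\pnoise)$, exactly as stated.

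The main obstacle, and the place where the non-entanglement-breaking hypothesis enters, is ensuring that the derived parameters are admissible: one needs $\pamp' \geq 1$, $0 \leq l' \leq 1$, and the denominator $1-l-\pnoise$ to be strictly positive. The first is immediate from $\pnoise \geq 0$. The remaining two amount to the single inequality $l + \pnoise < 1$. Here I would invoke the classification of phase-insensitive one-mode Gaussian channels (as referenced via Ref.~\cite{Holevo2007,Ivan_2011}): a channel acting as $V \mapsto \tau V + y\cdot\mathbb{1}/2$ with $\tau = 1-l$ and $y = l + 2\pnoise$ is entanglement-breaking precisely when $y \geq 1 + \tau$, which rearranges to $l + \pnoise \geq 1$. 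Thus the non-entanglement-breaking hypothesis is exactly what guarantees the amplification-then-loss decomposition is well-defined, and Proposition~\ref{prop:amp_loss_equivalence} follows.

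I expect the routine calculation to be genuinely routine; the only subtlety worth stating carefully in the proof is the identification of the non-entanglement-breaking regime with the parameter region $l+\pnoise<1$, which I would either cite from the standard classification or verify directly by noting that at $l+\pnoise = 1$ the channel maps every covariance matrix to one of the form $(1-l)V + (2-l)\cdot\mathbb{1}/2$, whose output fails to remain entangled with any purification, and for larger $\pnoise$ the channel factors through a measure-and-prepare map.
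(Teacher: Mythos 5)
Your proposal is correct and follows essentially the same route as the paper: compute the covariance-matrix action of loss followed by noise, match it against Eq.~\ref{eq:channel_cov_update} to solve for $l' = l + \pnoise$ and $\pamp' = (1-l)/(1-l-\pnoise)$, and observe that the denominator is positive exactly when the channel is not entanglement-breaking. The only differences are cosmetic — you spell out the admissibility checks ($\pamp'\geq 1$, $0\leq l'\leq 1$) and the entanglement-breaking criterion $y \geq 1+\tau$ that the paper simply cites, and one intermediate expression in your coefficient-matching step is garbled, though the conclusion it reaches is right.
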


\begin{equation}
\label{dia:amp_loss_equivalence}
\begin{quantikz}
\lstick{$\rho$} & \gate{l} & \gate{\pnoise} & \qw
\end{quantikz}
=
\begin{quantikz}
\lstick{$\rho$} & \gate{\pamp^\prime} & \gate{l^\prime} & \qw
\end{quantikz}
\end{equation}

\begin{proof}
By applying the definitions in Section~\ref{section:preliminaries}, loss with parameter $l$ followed by noise with parameter $\pnoise$ acts on covariance matrices $V$ according to
\begin{equation}
V \mapsto (1-l)V + (l/2 + \pnoise)\cdot\mathbb{1}.
\end{equation}
 This can be matched to Eq.~\ref{eq:channel_cov_update} to recover the effective amplification and loss parameters $\pamp^\prime, l^\prime$. Matching terms proportional to $V$ requires $(1-l^\prime)\pamp^\prime = (1-l)$, and then matching the constant terms and solving for $l^\prime$ gives $l^\prime = l + \pnoise$. The denominator of $\pamp^\prime$ is positive as long as $l + \pnoise < 1$, which is equivalent to requiring that the channel is not entanglement-breaking~\cite{holevo2008entanglementbreaking}.
\end{proof}

\begin{prop}
\label{prop:loss_noise}
Let $G$ be a Gaussian measurement of modes $\A\B$, let $l_\A, l_\B$ be parameters of loss channels, and let $\pnoise_\A, \pnoise_\B$ be parameters of noise channels. The effective measurement shown in Diagram~\ref{dia:loss_noise}

\begin{equation}
\label{dia:loss_noise}
\begin{quantikz}
\lstick{$\mathsf{A}$} & \gate{l_\A} & \gate{\pnoise_\A} & \gate[wires=2,style={xscale=.7,rounded rectangle,rounded rectangle left arc=none,inner sep=6pt,fill=white}]{G} \\
\lstick{$\mathsf{B}$} & \gate{l_\B} & \gate{\pnoise_\B} & \ghost{G}
\end{quantikz}
\end{equation}
is separable for all $G$ if and only if $l_\A + l_\B + \pnoise_\A + \pnoise_\B \geq 1$.
\end{prop}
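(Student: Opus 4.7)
My strategy is to reduce to Prop.~\ref{prop:amp_loss} via the channel re-parameterization in Prop.~\ref{prop:amp_loss_equivalence}, handling the entanglement-breaking regime as a separate case. The core observation is that loss-then-noise with $l+\pnoise < 1$ is exactly the non-entanglement-breaking regime identified in Prop.~\ref{prop:amp_loss_equivalence}, and in that regime the channel is equivalent to amplification with parameter $\pamp' = (1-l)/(1-l-\pnoise)$ followed by a loss channel with effective parameter $l' = l + \pnoise$. The target condition $l_\A + l_\B + \pnoise_\A + \pnoise_\B \geq 1$ is precisely the ``sum of effective losses $\geq 1$'' condition of Prop.~\ref{prop:amp_loss}, so this reduction is essentially forced.

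For the ``if'' direction I split on whether each side's loss-then-noise channel is entanglement-breaking. If both satisfy $l_\A + \pnoise_\A < 1$ and $l_\B + \pnoise_\B < 1$, I apply Prop.~\ref{prop:amp_loss_equivalence} mode by mode to rewrite Diagram~\ref{dia:loss_noise} as an amplification-then-loss circuit with loss parameters $l_\A^\prime = l_\A + \pnoise_\A$ and $l_\B^\prime = l_\B + \pnoise_\B$, and then invoke Prop.~\ref{prop:amp_loss} to conclude separability for all $G$ whenever $l_\A^\prime + l_\B^\prime \geq 1$. If instead at least one of the channels is entanglement-breaking, say the $\A$-channel with $l_\A + \pnoise_\A \geq 1$, then the sum inequality holds automatically and I only need to verify separability directly: writing the channel in measure-and-prepare form $\Phi_\A(\rho) = \sum_i \tr(F_i \rho)\sigma_i$ and dualizing, each POVM element $\Pi_{\A\B}$ of $G$ pulls back to $\sum_i F_i \otimes \tr_\A[(\sigma_i \otimes \id_\B)(\id_\A \otimes \Phi_\B^\dagger)(\Pi_{\A\B})]$, which is a sum of positive product operators and hence separable.

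For the ``only if'' direction, when $l_\A + l_\B + \pnoise_\A + \pnoise_\B < 1$ both individual channels automatically satisfy $l + \pnoise < 1$, so Prop.~\ref{prop:amp_loss_equivalence} applies on both sides and the effective loss parameters satisfy $l_\A^\prime + l_\B^\prime < 1$; Prop.~\ref{prop:amp_loss} (and the \cvbell{} construction cited in Prop.~\ref{prop:amp_loss_tight}) then exhibits a concrete $G$ whose effective measurement is inseparable. The only step that I expect to require real care is the entanglement-breaking case: I must confirm, using Ref.~\cite{holevo2008entanglementbreaking} and the parenthetical claim in the proof of Prop.~\ref{prop:amp_loss_equivalence}, that $l + \pnoise \geq 1$ is precisely the entanglement-breaking threshold for loss-then-noise, and that the dual of any entanglement-breaking channel sends a positive bipartite operator to a separable one. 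The rest is bookkeeping on the equivalence of parameterizations.
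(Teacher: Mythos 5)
Your proposal is correct and follows essentially the same route as the paper: re-parameterize each loss-then-noise channel as amplification followed by loss via Prop.~\ref{prop:amp_loss_equivalence} and invoke Prop.~\ref{prop:amp_loss} with effective loss parameters $l'_{\A}=l_{\A}+\pnoise_{\A}$, $l'_{\B}=l_{\B}+\pnoise_{\B}$. Your explicit case split for the entanglement-breaking regime $l+\pnoise\geq 1$ (where the re-parameterization of Prop.~\ref{prop:amp_loss_equivalence} does not apply) is a point the paper's own two-line proof glosses over, and your measure-and-prepare argument correctly closes that gap.
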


\begin{proof}
By Prop.~\ref{prop:amp_loss_equivalence}, this is equivalent to the circuit
\begin{equation}
\begin{quantikz}
\lstick{$\mathsf{A}$} & \gate{\pamp^\prime_\A} & \gate{l^\prime_\A} & \gate[wires=2,style={xscale=.7,rounded rectangle,rounded rectangle left arc=none,inner sep=6pt,fill=white}]{G} \\
\lstick{$\mathsf{B}$} & \gate{\pamp^\prime_\B} & \gate{l^\prime_\B} & \ghost{G}
\end{quantikz}
\end{equation}
for loss parameters $l^\prime_\A = l_\A + \pnoise_\A$ and $l^\prime_\B = l_\B + \pnoise_\B$ and amplification parameters $\pamp^\prime_\A$ and $\pamp^\prime_\B$. Applying Prop.~\ref{prop:amp_loss} immediately gives the condition for separability $l_\A + l_\B + \pnoise_\A + \pnoise_\B \geq 1$.
\end{proof}

\section{All pairs of single-mode Gaussian channels}
\label{section:all_pairs}

In this section we show that Prop.~\ref{prop:amp_loss}, which covers error channels that are amplification followed by loss, can be extended to cover all pairs of single-mode Gaussian error channels.
To summarize the result, for any pair of single-mode Gaussian channels the question of whether all effective Gaussian measurements are separable can be answered with the following procedure. First, check if either channel in the pair is entanglement-breaking, which would immediately imply that the effective measurement is separable. Then, if any channel is unitarily equivalent to a channel that adds noise to a single quadrature, treat it as the identity channel.
Finally, reparameterize any remaining channels, up to unitary equivalence, as amplification followed by loss and apply Prop.~\ref{prop:amp_loss}.
By unitary equivalence we mean that two channels are equivalent if one can be transformed into the other by applying single-mode Gaussian unitaries before and after.
We emphasize that reparameterization up to unitary equivalence is suitable for our purpose because any single-mode unitaries after the channels can be absorbed into a general Gaussian measurement and any single-mode unitaries before the channels can be absorbed into state preparation. In either case this will not affect the determination of whether or not all effective Gaussian measurements are separable. We prove in Prop.~\ref{prop:single_mode_gaussian_channels} that reparameterization, up to unitary equivalance, of a non-entanglement-breaking single-mode Gaussian channel as amplification followed by loss is possible unless it is unitarily equivalent to a channel that adds noise to a single quadrature. We then show that a channel that adds noise to a single quadrature can be treated as identity, meaning $l = a = 0$, when the inequality in Prop.~\ref{prop:amp_loss} is evaluated.
Thus, in the remainder of this section we prove that the procedure outlined previously is sufficient to determine, for any pair of single-mode Gaussian channels, whether all effective Gaussian measurements are separable.

\begin{prop}
\label{prop:single_mode_gaussian_channels}
 All non-entanglement-breaking single-mode Gaussian channels are unitarily equivalent to amplification followed by loss, or unitarily equivalent to a channel that adds noise to only one quadrature.
\end{prop}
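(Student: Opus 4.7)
The plan is to apply the classification of single-mode Gaussian channels up to unitary equivalence (pre- and post-composition with one-mode Gaussian unitaries) due to Holevo~\cite{Holevo2007} and Ivan et al.~\cite{Ivan_2011}, which organizes them into a short list of canonical classes distinguished by the rank and orientation of the transformation matrix $K$ (acting on displacements as $d \mapsto Kd$ and on covariance matrices as $V \mapsto K V K^T + N$) together with the structure of the noise matrix $N$. I would handle each class in turn and show that every non-entanglement-breaking representative falls into one of the two alternatives in the statement.

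First, the classes with $\det K = 0$ (the rank-deficient channels, labeled $A_1$ and $A_2$) are entanglement-breaking, as is the orientation-reversing class $D$, for which $K$ is unitarily equivalent to $k\,\mathrm{diag}(1,-1)$; the latter corresponds to phase conjugation composed with additive noise and admits a measure-and-prepare form whenever it is completely positive. I would cite the entanglement-breaking criteria for Gaussian channels from~\cite{holevo2008entanglementbreaking} to justify both exclusions. What remains is that any non-entanglement-breaking single-mode Gaussian channel has full-rank orientation-preserving $K$ and is unitarily equivalent either to $K = \mathbb{1}$ (classes $B_1$ and $B_2$) or to $K = k\,\mathbb{1}$ with $k > 0$, $k \neq 1$ (class $C$).

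For class $B_1$ the canonical $N$ has rank one, so the channel by definition adds noise to only a single quadrature, which is the second alternative. In classes $B_2$ and $C$ the canonical form is phase-insensitive with $N = n\,\mathbb{1}$, and I would invert the parameter map of Prop.~\ref{prop:amp_loss_equivalence} to write the channel as amplification followed by loss: solving $a(1-l) = k^2$ and $(a(1-l) + 2l - 1)/2 = n$ for $(a,l)$ gives $l = \tfrac{1}{2}(1 - k^2) + n$ and $a = k^2 / (1 - l)$. I would then verify, using the complete-positivity bound $n \geq |1 - k^2|/2$ and the non-entanglement-breaking condition $n < (1 + k^2)/2$, that $0 \leq l < 1$ and $a \geq 1$, so that $(a, l)$ are valid amplification-loss parameters.

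The main obstacle is confirming that class $D$ contains no non-entanglement-breaking channels, since otherwise such a channel would demand a separate treatment not covered by the two alternatives. This is a known structural fact in the single-mode Gaussian classification, invoked via the EB criteria of~\cite{Holevo2007,holevo2008entanglementbreaking}; the intuition is that phase conjugation is anti-unitary, so any completely positive completion must be sufficiently noisy to factor through a classical intermediary. With this in hand, the remaining case analysis reduces to the short bookkeeping described above.
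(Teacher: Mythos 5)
Your proposal is correct and follows essentially the same route as the paper: invoke the Holevo--Ivan classification, discard types $A$ and $D$ as entanglement-breaking, identify $B_1$ with the single-quadrature-noise alternative, and reparameterize the remaining phase-insensitive classes as amplification followed by loss. The only difference is cosmetic --- you solve the matching equations once for the canonical form $V \mapsto k^2 V + n\,\mathbb{1}$, whereas the paper does the same bookkeeping in three separate lemmas for $B_2$, $C_1$, and $C_2$.
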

\begin{proof}
Single-mode Gaussian channels have been classified up to unitary equivalence in Ref.~\cite{Holevo2007}.
We use the notation from Refs.~\cite{Holevo2007,holevo2008entanglementbreaking,Ivan_2011}, which define four families of single-mode Gaussian channels denoted $A$, $B$, $C$ and $D$ and prove that any single-mode Gaussian channel is a member of one of these families.
Channels of type $A$ represent complete loss of one or both quadratures and are entanglement-breaking~\cite{Weedbrook_2012,holevo2008entanglementbreaking}. Channels of type $D$ are the complements of amplification channels and are also entanglement-breaking~\cite{holevo2008entanglementbreaking}.
Types $B$ and $C$ are further subdivided into $B_1$, $B_2$, $C_1$, and $C_2$ in Refs.~\cite{holevo2008entanglementbreaking,Ivan_2011}.
The channel $B_1$ adds a constant amount of noise to a single quadrature.
Matching the definitions in Ref.~\cite{Ivan_2011} to the definitions in Section~\ref{section:preliminaries} shows that channels of type $B_2$ are noise channels with parameter $\pnoise$, channels of type $C_1$ can be parameterized as loss $l$ followed by added noise $\pnoise_l$, and channels of type $C_2$ can be parameterized as amplification $\pamp$ followed by added noise $\pnoise_\pamp$.
These channels may or may not be entanglement-breaking depending on the specific values of the parameters.
 We showed that non-entanglement-breaking channels of type $C_1$ can be reparametrized as amplification followed by loss in Prop.~\ref{prop:amp_loss_equivalence}, and we show it for channels of type $B_2$ in Lemma~\ref{prop:b2_amp_loss} and for channels of type $C_2$ in Lemma~\ref{prop:c2_amp_loss}.
\begin{lem}
\label{prop:b2_amp_loss}
A non-entanglement-breaking noise channel with parameter $\pnoise$ can be parameterized as amplification with parameter $\pamp^\prime = \frac{1}{1-\pnoise}$ followed by loss with parameter $l^\prime = \pnoise$.
\end{lem}
\begin{proof}
A noise channel with parameter $\pnoise$ has the action on a covariance matrix $V$ of $V \mapsto V + \pnoise\cdot\mathbb{1}$ and is entanglement-breaking if and only if $\pnoise \geq 1$~\cite{holevo2008entanglementbreaking}. If $\pnoise < 1$, then matching to the action of amplification $\pamp^\prime$ followed by loss $l^\prime$ in Eq.~\ref{eq:channel_cov_update} shows that $\pamp^\prime = \frac{1}{1-\pnoise}$ and $l^\prime = \pnoise$.
\end{proof}

\begin{lem}
\label{prop:c2_amp_loss}
 A non-entanglement-breaking Gaussian error channel that consists of amplification $\pamp$ followed by added noise $\pnoise$ is equivalent to an amplification channel followed by a loss channel with parameters $\pamp^\prime = \frac{\pamp}{1-\pnoise}$ and $l^\prime = \pnoise$ respectively. This equivalence is shown in Diagram~\ref{dia:c2_amp_loss}.

\begin{equation}
\label{dia:c2_amp_loss}
\begin{quantikz}
\lstick{$\rho$} & \gate{\pamp} & \gate{\pnoise} & \qw
\end{quantikz}
=
\begin{quantikz}
\lstick{$\rho$} & \gate{\pamp^\prime} & \gate{l^\prime} & \qw
\end{quantikz}
\end{equation}
\end{lem}
\begin{proof}
A channel that consists of amplification with parameter $\pamp$ followed by noise with parameter $\pnoise$ is entanglement-breaking if and only if $\pnoise \geq 1$~\cite{holevo2008entanglementbreaking}. If $\pnoise < 1$, then applying the definitions in Section~\ref{section:preliminaries} shows that the channel acts on covariance matrices $V$ as
\begin{equation}
V \mapsto \pamp V + (\pamp - 1)/2\cdot\mathbb{1} + \pnoise\cdot\mathbb{1}.
\end{equation}
Matching this to the action of amplification $\pamp^\prime$ followed by loss $l^\prime$ in Eq.~\ref{eq:channel_cov_update} shows that $\pamp^\prime = \pamp/(1-\pnoise)$ and $l^\prime = \pnoise$.
\end{proof}

 As a result, all pairs of non-entanglement-breaking single-mode channels of types $B_2, C_1, C_2$ can be reparameterized as amplification followed by loss and are covered by Prop.~\ref{prop:amp_loss}. The only remaining non-entanglement-breaking channels
are of type $B_1$ and add noise to a single quadrature.
 This completes the proof of Prop.~\ref{prop:single_mode_gaussian_channels}.

 \end{proof}

To address channels of type $B_1$, which add noise to a single quadrature, we first observe that the amount of added noise can be made arbitrarily small by anti-squeezing that quadrature before the channel and squeezing it afterward.
As a result, these channels are arbitrarily close to the identity channel by unitary equivalence
and can be treated as such when combined with another channel on the second
system for the purpose of checking whether a pair of error channels makes all effective Gaussian measurements separable. To make this claim precise,
consider a channel $\cB$ that adds noise $\varepsilon$ to one quadrature. We can compose it with a channel $\cB'$ that adds $\varepsilon$ of
noise to the orthogonal quadrature to get the added noise channel $\cN = \cB'\circ\cB$ with
noise parameter $\varepsilon$. Following a pair of channels with another channel on one
of the two systems cannot
result in an inseparable effective measurement if the original pair makes all effective measurements separable. Since we can make $\varepsilon$ arbitrarily small
by the squeezing procedure mentioned above, for the purpose of checking whether
a pair of channels including a channel of type $B_{1}$ allows for inseparable effective
measurement, we can treat  the type $B_{1}$ channel as equivalent to the identity
channel.  On the other hand, if the other channel together with the identity channel
makes all effective measurements separable, then replacing the identity channel
with $\cB$ cannot change this fact. In conclusion, when combining a channel
of type $B_{1}$ with another channel, separability of all effective measurements
is equivalent to that when the channel of type $B_{1}$ is replaced by the identity.

This completes the proof that the procedure stated at the beginning of Section~\ref{section:all_pairs} is sufficient to determine, for any pair of single-mode Gaussian channels, whether all effective Gaussian measurements are separable.

\section{Proof of Prop.~\ref{prop:amp_loss} using dual channels}
\label{section:dual_channels}

An alternative method to obtain the inequality in Prop.~\ref{prop:amp_loss} is to consider dual channels and apply the characterization found in Ref.~\cite{Filippov_Ziman} of pairs of channels that annihilate entanglement of Gaussian states, meaning that the output of every Gaussian input state is separable.
If $\Phi$ is a channel expressed as a completely-positive map from an input state $\rho$ to an output state $\Phi(\rho)$, then the dual channel is the completely-positive map $\Phi^*$ from input bounded operators $\Pi$ to output bounded operators $\Phi^*(\Pi)$ such that Eq.~\ref{eq:dual_equivalence} holds for all states $\rho$ and all bounded operators $\Pi$.
\begin{equation}
  \label{eq:dual_equivalence}
\text{tr}\left[\Phi(\rho) \cdot \Pi \right] = \text{tr}\left[\rho \cdot \Phi^*(\Pi) \right]
\end{equation}
For further information about dual Gaussian channels, we refer to Ref.~\cite{Ivan_2011}. In particular we use the fact from Ref.~\cite{Ivan_2011} that the dual of a loss channel, when restricted to density operators, is proportional to an amplification channel and vice versa.
The proportionality constants do not affect our analyses of separability.

To determine whether a joint Gaussian measurement on modes $\A$ and $\B$
after error channels $\Phi_{\A}$ and $\Phi_{\B}$ results in a separable effective measurement, we can apply the dual channels $\Phi_{\A}^{*}$ and $\Phi_{\B}^{*}$
to the POVM elements of the joint measurement. These POVM elements are proportional to the density operators of Gaussian
states, so if $\Phi_{\A}^{*}\otimes \Phi_{\B}^{*}$ annihilates entanglement of Gaussian states then the effective measurement is
separable. It suffices to check POVM elements that are pure Gaussian states,
which can be prepared by applying a Gaussian unitary to vacuum states. Therefore, it suffices to consider circuits of the form

\begin{equation}
\begin{quantikz}
\lstick{$\ket{0}$} & \gate[wires=2]{U_G} & \gate{\Phi_\A^*} & \qw \\
\lstick{$\ket{0}$} & \ghost{U_G} & \gate{\Phi_\B^*} & \qw \\
\end{quantikz}
\end{equation}
where $U_G$ is an arbitrary Gaussian unitary. The dual of an amplification channel with parameter $\pamp$ is a loss channel with parameter $l^* = 1-\frac{1}{\pamp}$ and the dual of a loss channel with parameter $l$ is an amplification channel with parameter $\pamp^* = \frac{1}{1-l}$, as stated in Theorem 9 of Ref.~\cite{Ivan_2011}.
As a result, to obtain the inequality in Prop.~\ref{prop:amp_loss} it suffices to analyze the circuit in Diagram~\ref{dia:dual_sep}
\begin{equation}
  \label{dia:dual_sep}
\begin{quantikz}
\lstick{$\ket{0}$} & \gate[wires=2]{U_G} & \gate{\pamp_\A^*} & \gate{l_\A^*} & \qw \\
\lstick{$\ket{0}$} & \ghost{U_G} & \gate{\pamp_\B^*} & \gate{l_\B^*} &\qw \\
\end{quantikz}
\end{equation}
for $\pamp^* = \frac{1}{1-l}$ and $l^* = 1-\frac{1}{\pamp}$ and determine the parameters at which an arbitrary Gaussian input state becomes separable.
Analysis of this scenario appears in Ref.~\cite{Filippov_Ziman}. The single-mode Gaussian channels considered in Ref.~\cite{Filippov_Ziman}
are parameterized as $\Phi(\kappa, \mu)$, where the action of $\Phi(\kappa,\mu)$ on covariance matrices is given by
\begin{equation}
V \mapsto \kappa V + \mu \cdot \mathbb{1}
\end{equation}
where $\kappa$ and $\mu$ are nonnegative real parameters.
This parameterization is general enough to cover any combination of loss, amplification, and added noise as defined in Section~\ref{section:preliminaries}. In terms of these parameters Ref.~\cite{Filippov_Ziman} proves the following condition for entanglement annihilation.

 \begin{prop}\label{FZfact}~\cite{Filippov_Ziman}
 The channel $\Phi(\kappa_{A} , \mu_{A} ) \otimes \Phi(\kappa_{B} , \mu_{B} )$ annihilates entanglement of all two-mode gaussian states if and only if $\kappa_{A} \mu_{B} + \kappa_{B} \mu_{A} \geq \tfrac{1}{2}(\kappa_{A} + \kappa_{B})$.
\end{prop}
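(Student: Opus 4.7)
The plan is to prove Prop.~\ref{FZfact} by combining the Simon PPT criterion for two-mode Gaussian separability with an explicit calculation on the two-mode squeezed vacuum (TMSV) and a reduction of general Gaussian inputs to TMSV. First I record the action of the product channel on covariance matrices: if $\gamma = \begin{pmatrix}A & C\\ C^T & B\end{pmatrix}$, then $\Phi(\kappa_A,\mu_A)\otimes\Phi(\kappa_B,\mu_B)$ maps $\gamma \mapsto K\gamma K^T + N$ with $K = \sqrt{\kappa_A}\,\mathbb{1}_A\oplus\sqrt{\kappa_B}\,\mathbb{1}_B$ and $N = \mu_A\mathbb{1}_A\oplus\mu_B\mathbb{1}_B$. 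Both $K$ and $N$ are scalar multiples of the identity on each single-mode block, so they commute with the partial-transpose reflection $\Sigma=\mathrm{diag}(1,1,1,-1)$ and with any local rotation. By Simon's PPT criterion, the output is separable iff $K(\Sigma\gamma\Sigma)K^T + N + i\Omega/2 \geq 0$.

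For necessity I evaluate this criterion on the TMSV covariance matrix with squeezing $r$. Setting $s=\cosh(2r)/2$, $t=\sinh(2r)/2$, and $s^2-t^2=\tfrac{1}{4}$, the PT-reflected output takes the block form $\begin{pmatrix}\alpha\mathbb{1}_2 & \tau\mathbb{1}_2\\ \tau\mathbb{1}_2 & \beta\mathbb{1}_2\end{pmatrix}$ with $\alpha=\kappa_A s+\mu_A$, $\beta=\kappa_B s+\mu_B$, and $\tau=\sqrt{\kappa_A\kappa_B}\,t$. Its symplectic eigenvalues coincide with the eigenvalues of the $2{\times}2$ matrix $\begin{pmatrix}\alpha & \tau\\ \tau & \beta\end{pmatrix}$, so separability is equivalent to $(2\alpha-1)(2\beta-1)\geq 4\tau^2$. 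Substituting and using $t^2=s^2-\tfrac{1}{4}$ gives
\begin{equation}
4s\bigl[\kappa_A\mu_B+\kappa_B\mu_A-\tfrac{1}{2}(\kappa_A+\kappa_B)\bigr] + \bigl[(2\mu_A-1)(2\mu_B-1)+\kappa_A\kappa_B\bigr] \geq 0,
\end{equation}
which must hold for every $s\geq\tfrac{1}{2}$. Taking $s\to\infty$ isolates the coefficient of $s$ and forces $\kappa_A\mu_B+\kappa_B\mu_A\geq\tfrac{1}{2}(\kappa_A+\kappa_B)$.

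For sufficiency I reduce general Gaussian inputs to TMSV. Since entanglement is maximized on pure states, it suffices to check pure two-mode Gaussian inputs, each of which has covariance matrix $(S_A\oplus S_B)\gamma_{\mathrm{TMSV}}(r)(S_A\oplus S_B)^T$ for local symplectics $S_A,S_B$. Because $K$ commutes with $S_A\oplus S_B$ but $N$ does not, the channel image is local-symplectically equivalent to $K\gamma_{\mathrm{TMSV}}(r)K^T + \tilde N$ with $\tilde N=\mu_A(S_AS_A^T)^{-1}\oplus\mu_B(S_BS_B^T)^{-1}$; the blocks of $\tilde N$ have the same determinants as those of $N$ but are generically asymmetric. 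The task is to show that TMSV with the isotropic $N$ is the worst case, so any squeezing-induced asymmetry in $\tilde N$ only imposes constraints already implied by the channel physicality condition $\mu_i\geq|1-\kappa_i|/2$.

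This extremality step is the main obstacle, since $\tilde N$ is qualitatively different from $N$ and one must rule out that some locally-squeezed TMSV survives the channel while the bare TMSV does not. I would attack it via a Schur-complement or variational argument that minimizes $K(\Sigma\gamma\Sigma)K^T+N+i\Omega/2$ over the convex cone of valid $\gamma$, or by reducing two-mode Gaussian states to a four-parameter standard form and verifying the inequality on each family. A cleaner alternative is to dilate each $\Phi(\kappa_i,\mu_i)$ into a unitary interaction with a Gaussian ancilla and exploit the rotational covariance of the channel to identify TMSV as extremal among pure inputs.
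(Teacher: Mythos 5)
First, a point of calibration: the paper does not prove this proposition at all --- it is imported verbatim from Ref.~\cite{Filippov_Ziman} and used as a black box, so your attempt is being measured against the original Filippov--Ziman argument rather than anything in this manuscript. Your ``only if'' direction is correct and complete: the channel acts on covariance matrices as $\gamma \mapsto K\gamma K^T + N$ with $K$ and $N$ isotropic on each mode, the partial transpose of the TMSV output has the block form you state, the reduction of its symplectic spectrum to the eigenvalues of the $2\times 2$ matrix $\bigl(\begin{smallmatrix}\alpha & \tau\\ \tau & \beta\end{smallmatrix}\bigr)$ checks out, and the expansion in $s=\cosh(2r)/2$ correctly isolates $\kappa_A\mu_B+\kappa_B\mu_A-\tfrac{1}{2}(\kappa_A+\kappa_B)$ as the coefficient that must be nonnegative as $s\to\infty$. (The reduction from mixed to pure inputs is also fine, though the cleaner justification is that any valid covariance matrix dominates a pure-state one and the excess is classical noise commuted through the channel, not that ``entanglement is maximized on pure states.'')

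The genuine gap is the one you name yourself: the ``if'' direction is not proven. You correctly observe that a general pure input is $(S_A\oplus S_B)\gamma_{\mathrm{TMSV}}(S_A\oplus S_B)^T$ and that conjugating away the local symplectics turns the isotropic noise $N$ into the anisotropic $\tilde N=\mu_A(S_AS_A^T)^{-1}\oplus\mu_B(S_BS_B^T)^{-1}$, so the TMSV calculation no longer applies. Showing that the isotropic case is extremal --- i.e., that no locally squeezed TMSV survives the channel when the bare TMSV does not --- is precisely the nontrivial content of the Filippov--Ziman theorem, and listing three candidate strategies (Schur complement, standard-form case analysis, dilation) without executing any of them does not discharge it. The danger is concrete: local squeezing makes $\tilde N$ arbitrarily small along one quadrature, and one must show that the corresponding blow-up along the conjugate quadrature always compensates in the PPT test; this is a genuine two-parameter optimization per mode, not a formality. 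As it stands you have proven necessity of the inequality but only conjectured sufficiency, so the proposal establishes half of the proposition.
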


If the channels take the form of amplification followed by loss, then this result can be conveniently written according to Prop~\ref{prop:amp_loss_direct}.
\begin{prop}
\label{prop:amp_loss_direct}
Consider amplification channels with parameters $\pamp^*_\A, \pamp^*_\B$ followed by loss channels with parameters $l^*_\A, l^*_\B$ that act independently on an arbitrary two-mode Gaussian pure state on $\A\B$, as shown in Diagram~\ref{dia:amp_loss_direct}.
\begin{equation}
  \label{dia:amp_loss_direct}
\begin{quantikz}
\lstick{$\ket{0}$} & \gate[wires=2]{U_G} & \gate{\pamp^*_\A} & \gate{l^*_\A} & \qw \\
\lstick{$\ket{0}$} & \ghost{U_G} & \gate{\pamp^*_\B} & \gate{l^*_\B} &\qw \\
\end{quantikz}
\end{equation}
These channels annihilate entanglement of all two-mode Gaussian states if and only if
\begin{equation}
\label{eq:amp_loss_direct}
\frac{1}{\pamp^*_\A} + \frac{1}{\pamp^*_\B} \leq 1.
\end{equation}
\end{prop}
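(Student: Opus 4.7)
The plan is to specialize the Filippov--Ziman characterization in Prop.~\ref{FZfact} to error channels of the form ``amplification followed by loss.'' From Eq.~\ref{eq:channel_cov_update}, such a channel with parameters $\pamp,l$ acts on a covariance matrix as $V \mapsto \pamp(1-l)V + (\pamp(1-l)+2l-1)\,\mathbb{1}/2$, so it is of the form $\Phi(\kappa,\mu)$ with $\kappa = \pamp(1-l)$ and $\mu = (\pamp(1-l)+2l-1)/2$. I would substitute these values (for both modes $\A$ and $\B$) into the Filippov--Ziman inequality $\kappa_\A \mu_\B + \kappa_\B \mu_\A \geq (\kappa_\A + \kappa_\B)/2$ and simplify.

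After multiplying by $2$, the left-hand side produces two copies of the cross term $\pamp^*_\A(1-l^*_\A)\pamp^*_\B(1-l^*_\B)$, and collecting the remaining linear pieces reduces the inequality to
\begin{equation*}
\pamp^*_\A(1-l^*_\A)\,\pamp^*_\B(1-l^*_\B) \;\geq\; \pamp^*_\A(1-l^*_\A)(1-l^*_\B) + \pamp^*_\B(1-l^*_\B)(1-l^*_\A).
\end{equation*}
In the setting of Diagram~\ref{dia:dual_sep} the dual loss parameters arise as $l^* = 1 - 1/\pamp$ with $\pamp \geq 1$ a finite original amplification parameter, so $l^*_\A, l^*_\B \in [0,1)$ and the common factor $(1-l^*_\A)(1-l^*_\B)$ is strictly positive. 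Dividing both sides by this factor reduces the inequality to $\pamp^*_\A \pamp^*_\B \geq \pamp^*_\A + \pamp^*_\B$, which rearranges to Eq.~\ref{eq:amp_loss_direct}. Since Prop.~\ref{FZfact} is an ``if and only if'' statement, so is the resulting criterion.

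There is no conceptual obstacle; the entire argument is substitution into a known theorem together with routine algebra. The one point worth highlighting is that $l^*_\A$ and $l^*_\B$ drop out of the final condition, appearing only through the positive overall prefactor $(1-l^*_\A)(1-l^*_\B)$. This cancellation is the expected shadow of Prop.~\ref{prop:amp_loss}: under the duality $\pamp^* = 1/(1-l)$, the original loss parameters $l_\A,l_\B$ that control separability correspond to the dual amplification parameters $\pamp^*_\A, \pamp^*_\B$, while the original amplification parameters (which did not affect separability) correspond to the dual loss parameters $l^*_\A, l^*_\B$ that must drop out here.
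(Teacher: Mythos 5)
Your proposal is correct and follows essentially the same route as the paper: substitute $\kappa = \pamp^*(1-l^*)$ and $\mu = \kappa/2 + l^* - 1/2$ into the Filippov--Ziman criterion of Prop.~\ref{FZfact}, arrive at $\kappa_\A\kappa_\B \geq \kappa_\A(1-l^*_\B) + \kappa_\B(1-l^*_\A)$, and cancel the common factor $(1-l^*_\A)(1-l^*_\B)$ to obtain $\pamp^*_\A\pamp^*_\B \geq \pamp^*_\A + \pamp^*_\B$. Your explicit justification that this factor is strictly positive (so the cancellation is valid) is a point the paper leaves implicit, but it is not a different argument.
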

\begin{proof}
In the notation of Prop.~\ref{FZfact}, amplification $\pamp^*$ followed by loss $l^*$ is equivalent to $\Phi(\kappa,\mu)$ with $\kappa = \pamp^*(1-l^*)$ and $\mu = \kappa/2 + l^* - 1/2$. Substituting $\kappa_i, \mu_i$ for $i \in \{A,B\}$ into the inequality in Prop.~\ref{FZfact} leads to
\begin{equation}
\kappa_\A \kappa_\B \geq \kappa_\A(1-l^*_\B) + \kappa_\B(1-l^*_\A).
\end{equation}
and further substitution for $\kappa_\A,\kappa_\B$ and cancellation of $(1-l^*_\A)(1-l^*_\B)$ from both sides leads to
\begin{equation}
\pamp^*_\A \pamp^*_\B \geq \pamp^*_\A + \pamp^*_\B
\end{equation}
and the result follows.
\end{proof}

Now, substitution of the dual channel parameters $\pamp^* = \frac{1}{1-l}$ and $l^* = 1-\frac{1}{\pamp}$ into Eq.~\ref{eq:amp_loss_direct} gives the inequality
\begin{equation}
(1-l_\A) + (1-l_\B) \leq 1,
\end{equation}
which leads directly to the inequality $l_\A + l_\B \geq 1$, which is independent of
amplification parameters as asserted by Prop.~\ref{prop:amp_loss}.

\section{Conclusion}
In conclusion, we study two-mode Gaussian measurements that are made after independent single-mode Gaussian error channels and show that, if the error channels are amplification with parameters $\pamp_\A, \pamp_\B$ followed by loss with parameters $l_\A, l_\B$, then all effective Gaussian measurements are separable if and only if $l_\A + l_\B \geq 1$.
If the error channels are instead parameterized as loss $l_\A,l_\B$ followed by noise $\pnoise_\A, \pnoise_\B$, then this condition becomes $l_\A + l_\B + \pnoise_\A + \pnoise_\B \geq 1$.
The standard \cvbell{} measurement is a joint Gaussian measurement that remains inseparable for all error channels whose parameters do not satisfy this inequality.
Up to unitary equivalence, all pairs of non-trivial and non-entanglement-breaking single-mode Gaussian channels that act before the Gaussian measurement can be reduced to these cases.
In particular, for a given pair of independent single-mode Gaussian error channels this analysis is sufficient to determine whether or not entanglement swapping is possible with a joint Gaussian measurement, even when arbitrary input states are allowed.

Future work can extend this analysis to joint Gaussian measurements on $n>1$ modes of party $\mathsf{A}$ and $m>1$ modes of party $\mathsf{B}$, or to correlated Gaussian error channels, or to non-Gaussian bosonic error channels that are experimentally relevant.

\appendix

\section{The inequality in Prop.~\ref{prop:amp_loss} is tight}
\begin{prop}
\label{prop:amp_loss_tight}
For error channels consisting of amplification with parameters $a_\A,a_\B$ followed by loss with parameters $l_\A,l_\B$, all-Gaussian entanglement swapping is possible if $l_\A + l_\B < 1$ regardless of the amplification parameters.
\end{prop}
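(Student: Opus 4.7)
The plan is to exhibit an explicit all-Gaussian entanglement swapping protocol that succeeds whenever $l_\A + l_\B < 1$, then verify entanglement of the swapped state via the Simon (equivalently Duan et al.) criterion. Each remote party prepares a two-mode squeezed vacuum (TMSV) with squeezing parameter $r$, holding modes $\A_{1}\A$ and $\B\B_{1}$ respectively. Modes $\A$ and $\B$ are routed through the amplification channels with parameters $\pamp_\A, \pamp_\B$ followed by loss channels with parameters $l_\A, l_\B$, and then combined at the central station where a \cvbell{} measurement is performed. The residual modes $\A_{1}\B_{1}$ are the output of the protocol, and the goal is to show their conditional covariance matrix corresponds to an entangled state.

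The calculation proceeds in the covariance-matrix picture. First I would write down the $8 \times 8$ covariance matrix of the initial four-mode state, which is a direct sum of two TMSV blocks. Next I would apply the single-mode updates for amplification followed by loss using Eq.~\ref{eq:channel_cov_update} on the $\A$ and $\B$ blocks, then apply the balanced-beamsplitter symplectic transformation on those modes. I would then extract the conditional covariance matrix of $\A_{1}\B_{1}$ given the homodyne outcomes via the standard Schur-complement formula, handling the improper homodyne projectors through the appropriate pseudo-inverse limit. Finally, I would apply the Simon separability criterion to the resulting $4 \times 4$ covariance matrix.

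The key observation is that amplification by $\pamp$ prior to loss $l$ and beamsplitter mixing can be absorbed into the resource parameters: it rescales the effective squeezing of the TMSV signal and contributes only a bounded amount of additive noise, which becomes negligible as $r$ grows. Consequently, for any fixed $\pamp_\A, \pamp_\B$, taking $r$ sufficiently large reduces the Simon inequality to a condition on $l_\A, l_\B$ alone, which I expect to be exactly $l_\A + l_\B < 1$, matching (and thereby proving tightness of) Prop.~\ref{prop:amp_loss}. The main obstacle is keeping the $8 \times 8$ covariance bookkeeping tractable through the beamsplitter and conditional update; however, the symmetry between the $\A$ and $\B$ branches, together with the fact that a \cvbell{} measurement decouples the $\hat x$ and $\hat p$ sectors, should allow the calculation to split into two independent $2 \times 2$ problems so that the final inequality emerges transparently.
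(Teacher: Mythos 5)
Your proposal follows essentially the same route as the paper's proof: two TMSV resource states, amplification-then-loss on the transmitted modes, a \cvbell{} measurement at the central station, extraction of the conditional covariance matrix of the retained modes, and the Duan/Simon criterion in the large-$r$ limit, which indeed collapses to $l_\A + l_\B < 1$ independently of $\pamp_\A,\pamp_\B$. The only remaining work is to carry out the covariance bookkeeping explicitly, as the paper does in Eqs.~\ref{eq:nc_subs}--\ref{eq:duan}.
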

\begin{proof}
This follows by combining the arguments in Section~\ref{section:dual_channels} with the fact that entanglement swapping using two copies of the infinitely squeezed resource state, given by $\sum_i^\infty \ket{i}\otimes\ket{i}$ in the Fock basis, and conditioning on an outcome of the swapping measurement associated to an inseparable POVM element will necessarily result in an inseparable output state. However, we also provide in this section a direct covariance matrix calculation for all-Gaussian entanglement swapping with finitely squeezed resource states and a \cvbell{} measurement that takes place after amplification and loss channels are applied. We first compute the full four-mode covariance matrix for a pair of two-mode-squeezed
states subjected to loss channels with parameters $l_\A, l_\B$ and amplification channels with parameters $a_\A,a_\B$. Then we compute the resulting two-mode covariance matrix after a \cvbell{} measurement is applied to the modes that experience the amplification and loss. We use the formula from Ref.~\cite{Weedbrook_2012} to compute the post-measurement covariance matrix, which has the form

\begin{equation}
\label{eq:covmatstd}
\begin{pmatrix}
n_\A & 0 & c & 0\\
0 & n_\A & 0 & -c\\
c & 0 & n_\B & 0\\
0 & -c & 0 & n_\B
\end{pmatrix} \\
\end{equation}
for non-negative real parameters $n_\A,n_\B,c$. In terms of $a_\A,a_\B,l_\A,l_B$ and the two-mode-squeezing parameter $r$ we define for convenience $\kappa_\A := a_\A(1-l_\A), \kappa_\B := a_\B(1-l_\B)$ and $\eta := 1-l_\A-l_\B$ and find that

\begin{equation}
\label{eq:nc_subs}
\begin{aligned}
    n_\A &= \frac{2 \kappa_\A + \kappa_\B +2 (\kappa_A + \kappa_B - 2\eta)\cosh{2r} + \kappa_\B\cosh{4r}}{4 (\kappa_\A + \kappa_B - 2\eta + (\kappa_\A + \kappa_B)\cosh{2r})} \\
    n_\B &= \frac{ 2\kappa_\B + \kappa_\A +2 (\kappa_A + \kappa_B - 2\eta)\cosh{2r} + \kappa_\A\cosh{4r}}{4 (\kappa_\A + \kappa_B - 2\eta + (\kappa_\A + \kappa_B)\cosh{2r})} \\
    c &= \frac{2\sqrt{\kappa_\A \kappa_B}(\cosh{r}\sinh{r})^2}{\kappa_A + \kappa_B - 2\eta + (\kappa_A + \kappa_B)\cosh{2r}}
\end{aligned}
\end{equation}

According to Theorem 2 from Ref.~\cite{Duan_2000}, the state is entangled if and only if the inequality

\begin{equation}
\label{eq:duan}
    2 a^2 n_\A + 2\frac{n_\B}{a^2} - 4c - a^2 - \frac{1}{a^2} < 0
\end{equation}
is satisfied, where $a^2 := \sqrt{\frac{2n_\B-1}{2n_\A-1}}$. We note that the entries of the covariance matrix in Eq.~\ref{eq:covmatstd} are (co-)variances of quadratures, while Ref.~\cite{Duan_2000} uses the convention that the analogous covariance matrix's entries are twice the (co-)variances of quadratures.

Substituting Eqs.~\ref{eq:nc_subs} into Eq.~\ref{eq:duan} leads to
\begin{equation}
-2\frac{(1 - l_\A - l_\B)}{\sqrt{\kappa_\A \kappa_\B}} < 0
\end{equation}
in the limit that $r$ goes to infinity.
The parameters $\kappa_\A,\kappa_\B$ are nonnegative, so this inequality is satisfied when $l_\A + l_\B < 1$ regardless of $a_\A,a_\B$.

\end{proof}

\section{Separable measurements cannot swap or teleport entanglement}
{
\label{section_sep_meas}
\newcommand{\Ap}{\A^\prime}
\newcommand{\Bp}{\B^\prime}
\begin{prop}
  \label{prop_sep_meas}
Consider a joint system $\Ap,\A,\B,\Bp$ in an initial state $\rho_{\Ap\A\B\Bp}$ that is separable across the $\Ap\A$-$\Bp\B$ partition and consider a measurement of the $\A\B$ subsystem with an outcome associated to some POVM element $\Pi_{\A\B}$ that is a convex combination of positive product operators across the $\A$-$\B$ partition.
Assume for simplicity that the measured system $\A\B$ is lost after the measurement.
The state of the system $\Ap\Bp$ after the measurement is separable.
\end{prop}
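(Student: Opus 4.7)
The plan is a direct calculation using the two separability assumptions. First I invoke the separability of the initial state across the $\Ap\A$--$\Bp\B$ partition to write $\rho_{\Ap\A\B\Bp}=\sum_i p_i\, \sigma^i_{\Ap\A}\otimes \tau^i_{\Bp\B}$ with $p_i\geq 0$ and $\sigma^i_{\Ap\A},\tau^i_{\Bp\B}$ positive, and I use the hypothesis on the outcome operator to write $\Pi_{\A\B}=\sum_j q_j\, P^j_\A\otimes Q^j_\B$ with $q_j\geq 0$ and $P^j_\A,Q^j_\B$ positive. If either decomposition happens to be continuous, sums are replaced by integrals and no step below is affected.

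Next, because the measured system is lost, the unnormalized post-measurement state on $\Ap\Bp$ conditioned on the outcome associated to $\Pi_{\A\B}$ is
\begin{equation}
\tilde\rho_{\Ap\Bp}=\text{tr}_{\A\B}\bigl[(\id_{\Ap\Bp}\otimes \Pi_{\A\B})\,\rho_{\Ap\A\B\Bp}\bigr],
\end{equation}
independent of which Kraus decomposition implements the measurement, since cyclicity under the partial trace over $\A\B$ collapses any $\sum_r K_r^\dagger K_r$ back to $\Pi_{\A\B}$. Substituting both expansions and pulling the sums outside the partial trace yields
\begin{equation}
\tilde\rho_{\Ap\Bp}=\sum_{i,j} p_i q_j\, \text{tr}_\A\bigl[(\id_{\Ap}\otimes P^j_\A)\sigma^i_{\Ap\A}\bigr]\otimes \text{tr}_\B\bigl[(\id_{\Bp}\otimes Q^j_\B)\tau^i_{\Bp\B}\bigr].
\end{equation}
Writing $P^j_\A=(\sqrt{P^j_\A})^{2}$ and using cyclicity identifies each $\A$-factor with $\text{tr}_\A\!\bigl[(\id_{\Ap}\otimes\sqrt{P^j_\A})\sigma^i_{\Ap\A}(\id_{\Ap}\otimes\sqrt{P^j_\A})\bigr]$, which is the partial trace of a positive operator and is therefore positive on $\Ap$; the $\Bp$-factor is positive by the same argument. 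Hence $\tilde\rho_{\Ap\Bp}$ is a nonnegative combination of positive product operators, i.e., a subnormalized separable state; dividing by its (strictly positive) trace yields a normalized separable state on $\Ap\Bp$.

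There is no substantive obstacle here — the argument is bookkeeping with partial traces and tensor products. The one point that warrants care, and is the closest thing to a pitfall, is that the initial-state separability must be taken across the coarse bipartition $\Ap\A\,|\,\Bp\B$ rather than any finer cut, because this is precisely what allows the $\A$-side partial trace against a positive operator to deliver a positive operator on $\Ap$, and likewise on the $\B$-side. Once the bipartition on the state is aligned with the $\A$-$\B$ product structure of $\Pi_{\A\B}$, the two structures mesh and the output inherits a product structure across $\Ap$-$\Bp$.
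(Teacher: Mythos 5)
Your proposal is correct and follows essentially the same route as the paper's proof: decompose the separable state and the separable POVM element into nonnegative combinations of product operators, substitute into the partial-trace expression for the unnormalized post-measurement state, and read off the product structure across $\Ap$--$\Bp$. The only addition is your explicit verification, via $P = (\sqrt{P})^{2}$ and cyclicity, that each single-party factor is a positive operator, a point the paper leaves implicit.
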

\begin{proof}
The unnormalized state after the measurement, denoted $\sigma_{\Ap\Bp}$, is
\begin{equation}
  \label{eq_final_state}
\sigma_{\Ap\Bp} = \text{tr}_{\A\B}\left[\rho_{\Ap\A\B\Bp} \cdot \id_{\Ap\Bp}\otimes \Pi_{\A\B} \right].
\end{equation}
By assumption the POVM element $\Pi_{\A\B}$ can be decomposed as $\sum_i \lambda_i A_i \otimes B_i$ for positive coefficients $\lambda_i$ and positive operators $\A_i,\B_i$.
Furthermore, the initial state $\rho_{\Ap\A\B\Bp}$ is assumed separable and can be decomposed according to $\rho_{\Ap\A\B\Bp} = \sum_j \gamma_j \rho_{\Ap\A}\otimes\rho_{\B\Bp}$ for states $\rho_{\Ap\A}$ and $\rho_{\B\Bp}$ and positive coefficients $\gamma_j$.
Substituting into Eq.~\ref{eq_final_state} leads to
\begin{equation}
  \sigma_{\Ap\Bp} = \sum_{i,j} \lambda_i\gamma_j \left(\text{tr}_{\A} \left[\rho_{\Ap\A}\cdot \id_{\Ap}\otimes A\right]\right) \otimes \left( \text{tr}_{\B}\left[\rho_{\B\Bp}\cdot B\otimes\id_{\Bp} \right]\right)
\end{equation}
which is separable across the $\Ap$-$\Bp$ partition.
\end{proof}

Prop.~\ref{prop_sep_meas} immediately implies that entanglement swapping with a separable measurement on $\A\B$ cannot generate entanglement across the $\Ap$-$\Bp$ partition. Similarly, if the state of $\Ap\A$ is initially entangled and a separable measurement of $\A\B$ is used to attempt to teleport the state of system $\A$ to the system $\Bp$, then Prop.~\ref{prop_sep_meas} implies that the output state of $\Ap\Bp$ is unentangled and therefore entanglement with $\Ap$ has not been preserved during teleportation.
}

\begin{acknowledgments}
  The authors thank Arik Avagyan, Akshay Seshadri, and Victor Albert for helpful comments. A. Kwiatkowski, E.S., S.A., A. Kyle, and  C.R. acknowledge support from the Professional Research Experience Program (PREP) operated jointly by NIST and the University of Colorado. This work includes contributions of the National Institute of
  Standards and Technology, which are not subject to U.S. copyright.
\end{acknowledgments}


\bibliography{gaussian_swapping}

\end{document}